\newtheorem{proposition}{Proposition}
\newtheorem{proposition?}{Proposition?}
\newtheorem*{proposition*}{Proposition}
\newtheorem{lemma}{Lemma}
\theoremstyle{definition}
\newtheorem{definition}{Definition}
\newtheorem{observation}{Observation}
\newtheorem{observation*}{Observation}
\newcommand{\sh}{\mathcal{S(H)}} %states
\newcommand{\ket}[1]{|#1\rangle} %ket
\newcommand{\bra}[1]{\langle#1|} %bra
\newcommand{\kb}[2]{|#1\rangle\langle#2|} %ketbra
\newcommand{\braket}[2]{\langle #1 | #2 \rangle} % for Dirac brakets
\newcommand{\ketbra}[2]{\left| #1 \rangle\langle #2 \right|} % for
\newcommand{\no}[1]{\left\|#1\right\|} %norm
\newcommand{\tr}[1]{\textrm{tr}\left[#1\right]} %trace
\newcommand{\id}{\mathbbm{1}} %identity operator
\newcommand{\var}{\textrm{Var}} %variance
\newcommand{\ve}{\mathbf{e}} %e
\newcommand{\vsigma}{\boldsymbol{\sigma}} %sigma
\newcommand{\E}{\mathsf{E}}%generic observable
\newcommand{\F}{\mathsf{F}}%generic observable
\newcommand{\G}{\mathsf{G}}%generic joint observable
\renewcommand{\P}{\mathsf{P}}%sharp observable
\newcommand{\M}{\mathsf{M}}%observable
\newcommand{\N}{\mathsf{N}}%observable
\newcommand{\ze}{\mathbf{0}}
\newcommand\numberthis{\addtocounter{equation}{1}\tag{\theequation}}
\newcommand{\cft}{Center for Theoretical Physics, Polish Academy of Sciences, Al. Lotników 32/46, 02-668 Warszawa, Poland}
\newcommand{\nasa}{USRA Research Institute for Advanced Computer Science (RIACS), Mountain View, CA, 94043, USA}
\begin{document}
\title[]{Estimating Quantum Hamiltonians via Joint Measurements \\of Noisy Non-Commuting Observables}

\

\date{March 13, 2023}
\author{Daniel McNulty}
\affiliation{\cft}
\author{Filip B. Maciejewski}
\affiliation{\cft}
\affiliation{\nasa}
\author{Michał Oszmaniec}
\affiliation{\cft}

\begin{abstract}
Estimation of expectation values of incompatible observables is an essential practical task in quantum computing, especially for approximating energies of chemical and other many-body quantum systems.
In this work we introduce a method for this purpose based on performing a single joint measurement that can be implemented locally and whose marginals yield noisy (unsharp) versions of the target set of non-commuting Pauli observables. We  derive bounds on the number of experimental repetitions required to estimate energies up to a certain precision. We compare this strategy to the classical shadow formalism and show that our method yields the same performance as the locally biased classical shadow protocol. We also highlight some general connections between the two approaches by showing that classical shadows can be used to construct joint measurements and vice versa. Finally, we adapt the joint measurement strategy to minimise the sample complexity when the implementation of measurements is assumed noisy. This can provide significant efficiency improvements compared to known generalisations of classical shadows to noisy scenarios. 
\end{abstract}

\maketitle

\section{Introduction}

Measurement incompatibility, one of the defining non-classical features of quantum theory, limits an observer's ability to measure certain physical properties of a system simultaneously. While typically viewed as a quantum resource  \cite{heinosaari15,chitambar19}, essential in applications such as non-locality \cite{wolf09}, quantum steering \cite{quintino14,uola15}, and state discrimination \cite{skrzypczyk19,carmeli19}, incompatibility is a major issue for variational quantum algorithms \cite{mcclean16,kandala17,preskill18,bharti21}, which constitute one of the leading candidates for attaining quantum speedups in near-term quantum computers. These algorithms require the estimation of expectation values of a quantum many-body Hamiltonian (encoding, for example, a molecular system relevant for quantum chemistry) on states occurring in the course of a classical-quantum optimisation loop. The Hamiltonian of interest, which in general cannot easily be measured in its eigenbasis, is described by a linear combination of Pauli operators (tensor products of single qubit Pauli matrices). Estimating the expectation values of all relevant Pauli operators, and subsequently the Hamiltonian, involves measuring large collections of incompatible observables. 

To overcome this computational burden, many strategies have been introduced \cite{hamamura20,cade20,torlai20,cai20,huggins21,bonet20,khoda21,huggins21a}, with a typical method involving some grouping of the observables into compatible sets, e.g.,  \cite{jena19,gokhale19,yen20a,verteletskyi20,crawford21,izmaylov19a,zhao20,wu21}. Another approach is a technique called \emph{classical shadows} \cite{huang20}, based on a practical implementation of ideas from shadow tomography \cite{aaronson18}. This framework involves a randomised measurement strategy, implemented with random unitary circuits, that builds a classical approximation of the unknown state to efficiently estimate linear and non-linear functions of the state \cite{huang20}. The protocol readily applies to the problem of estimating multiple expectation values of incompatible observables (as well as many other applications \cite{yu21,elben20,kliesch21,rath21,zhao21,huang21}) and leads to rigorous bounds on the sample complexity (i.e., the number of state preparations) to achieve accurate estimations. 

In this work we present a new approach, conceptually distinct from previous methods for estimating multiple expectation values, using ideas from the theory of measurement incompatibility \cite{busch,heinosaari16,guhne21,heinosaari08}. While joint measurability and commutativity are equivalent notions for projective (von Neumann) measurements, general measurements, i.e., positive operator-valued measures (POVMs), do not necessarily require commutativity to be measured jointly. Importantly, a set of incompatible observables can be measured simultaneously \emph{if} a sufficient amount of noise is added to the measurements. In our scheme we use the randomisation of local projective measurements to simultaneously perform noisy (unsharp) versions of Pauli measurements. This strategy is easily implementable on a quantum computer and can be extended to a locally biased $n$-qubit joint measurement which, after efficient classical post-processing, reproduces the outcome statistics of any sufficiently noisy Pauli measurement (even though no physical noise resides in the system). The outcomes of the joint measurement are then used to construct unbiased estimators of the expectation values of the original noiseless observables, or some linear combination, such as a Hamiltonian (see Fig.\ref{fig:general_idea}).

\begin{figure*}[!t]
\centering
	\includegraphics[width=17cm]{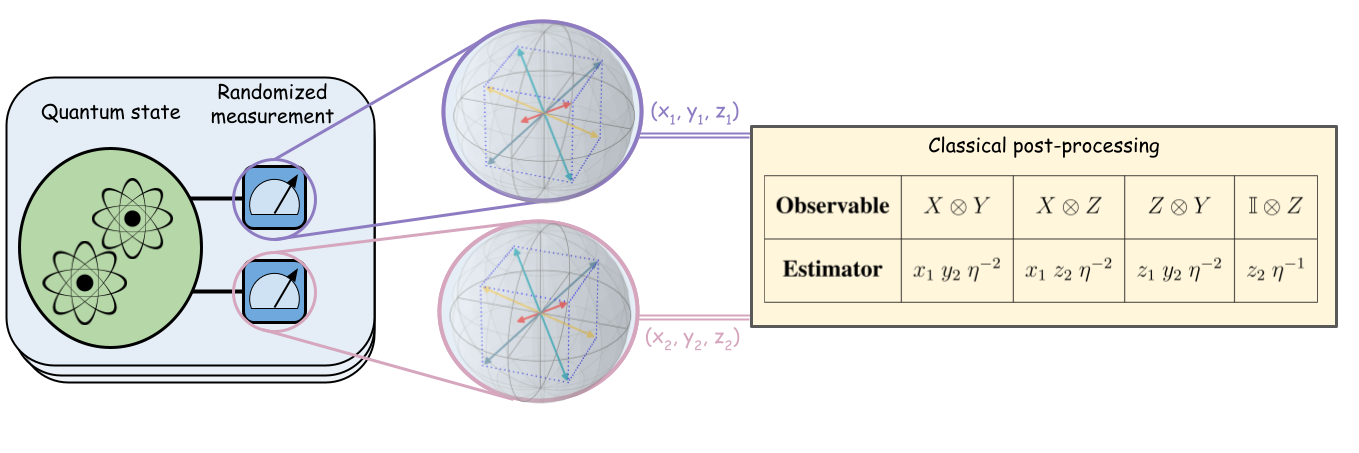}
	\caption{\label{fig:general_idea} Illustration of the main idea of the protocol for simultaneously estimating expectation values of Pauli observables. A joint measurement of noisy Pauli observables (with uniform noise $\eta$) of the form (\ref{eq:qubit_jm_biased}) is implemented on each qubit via randomisation of projective measurements corresponding to pairs of opposite vertices of a cube. Associated with each vertex is a classical outcome $(x_i,y_i,z_i)$ (with $i$ labelling the qubit). The estimators are constructed via classical post-processing by taking a product of outcomes and dividing by the noise. The scheme generalises to a locally biased joint measurement (see main text for details). 
	}
\end{figure*}

Our analysis offers a new and conceptually simpler perspective on understanding randomised protocols for estimating multiple non-commuting observables. We derive bounds on the sample complexity of the protocol, which we show to be identical to those derived for locally biased classical shadows \cite{hadfield20}, a generalised version of the original local classical shadow protocol \cite{huang20}. We then formulate some basic connections between joint measurability and classical shadows, showing that joint measurements can be used to construct classical shadows and, conversely, the shadow protocol defines a joint measurement. Finally, we study the effects of physical noise on our protocol, adapting the joint measurement strategy to minimise the sample complexity when the implementation of physical measurements (used to simulate parent POVMs of unsharp Pauli observables) is affected by readout noise from the quantum computing device \cite{chen19,maciejewski21,bravyi21}. We find that this approach can provide significant performance improvements compared to known generalisations of classical shadows to noisy scenarios.  \cite{koh20,chen21}.

\section{Preliminaries}\label{sec:pre}

Let $\mathcal H$ be a finite dimensional Hilbert space of $\dim \mathcal H=d<\infty$. A measurement is described by a POVM $\M$, with a finite outcome set $\Omega$, and consists of positive semidefinite matrices (effects) $\M(s)\geq 0$, for which $\sum_{s\in \Omega} \M(s)=\id$, where $\id$ is the identity of $\mathcal H$. For a quantum state $\rho$, the outcome probability distribution of $\M$ is given by $p(s|\rho)=\tr{\M(s)\rho}$. A finite collection of measurements $\M_1,\ldots, \M_m$ is \emph{jointly measurable} (compatible) whenever their statistics can be reproduced by classical post-processing of the statistics from a single POVM. In particular, there exists a measurement $\G$ with outcome set $\Omega_{\G}$ such that the effects $\M_j(s_j)$ can be obtained from $\G$ via the classical post-processing (i.e., a stochastic transformation),
\begin{equation}\label{eq:postprocessing}
\M_{j}(s_j)=\sum_{\lambda\in\Omega_{\G}} D(s_j|j,\lambda)\G(\lambda)\,,
\end{equation}
where $0\leq D(s_j|j,\lambda)\leq 1$ and $\sum_{s_j}D(s_j|j,\lambda)=1$, for all $j=1,\ldots,m$. Equivalently, a set of observables is said to be jointly measurable if there exists a single POVM whose marginals yield all effects of the individual observables, otherwise they are said to be incompatible \cite{busch,ali09}. A simple proof of the equivalence is provided in Appendix \ref{A:jm} for completeness.

As an example, consider the qubit Pauli observables $X:=\sigma_x$, $Y:=\sigma_y$ and $Z:=\sigma_z$, and their noisy unsharp versions, $\M^{\eta^x}_X(x)=\frac{1}{2}(\id+x\eta^x X)$, $\M^{\eta^y}_Y(y)=\frac{1}{2}(\id+y\eta^y Y)$ and $\M^{\eta^z}_Z(z)=\frac{1}{2}(\id+z\eta^z Z)$,
with outcomes $x,y,z\in\{\pm 1\}$ and $0\leq\eta^x,\eta^y,\eta^z\leq 1$. It is well known (cf. \cite{heinosaari08,busch86,brougham07}) that the triple is jointly measurable if and only if $({\eta^{x}})^2+({\eta^{y}})^2+({\eta^{z}})^2\leq 1$. The corresponding joint measurement, whose marginals are the unsharp Pauli observables, is 
\begin{equation}\label{eq:qubit_jm_biased}
\G(x,y,z)=\frac{1}{8}\left(\id+x\eta^{x}X+y\eta^{y}Y+z\eta^{z}Z\right)\,.
\end{equation}

Given an arbitrary set of $m$ observables $O_1,\ldots,O_m,$ and an unknown $n$-qubit quantum state $\rho$, our aim will be to provide estimators $\hat{O}_j$ to the expectation values $\tr{O_j\rho}$, up to a certain precision. In particular, we would like to know the number of copies of $\rho$ (i.e., the sample complexity $N$) such that  for all $j=1,\ldots,m$, $|\tr{O_j\rho}-\hat{O}_j|<\epsilon$, with probability at least $1-\delta$.

For simple single shot estimators, such as those for local dichotomic observables, Hoeffding's inequality provides an effective way to bound the sample complexity of the estimation protocol. In other cases, such as a Hamiltonian, the median-of-means approach gives a simple classical post-processing strategy which can reduce the effect of estimation errors \cite{huang20}. This method (explained further in Appendix \ref{A:mom}) depends on the variance of the estimator $\hat{O}_j$ and leads to the following bound on the sample complexity,
\begin{equation}
N=O\left(\frac{\log(m/\delta)}{\epsilon^2}\max_{1\leq j \leq m}\var[\hat{O}_j]\right)\,.
\end{equation}

\section{Estimating expectations via joint measurements}\label{sec:protocol}

The observables we wish to measure simultaneously are the set of $n$-qubit Pauli strings $P=\otimes_{i=1}^n P_{i}$, (with $\mathbb{P}_n$ denoting the set) where $P_{i}\in\{\id, X,Y,Z\}$. The joint measurability strategy to estimate the expectation values $\tr{P\rho}$, for all $P\in\mathbb{P}_n$, can be described succinctly as follows. First we perform a locally biased joint measurement to implement an observable of the form (\ref{eq:qubit_jm_biased}) on each qubit system. To obtain the outcome of the unsharp version of $P$, note that each local measurement provides an outcome tuple $(x_i,y_i,z_i)$, hence we take the product of local outcomes $p_i$ (equal to either $x_i,y_i$ or $z_i$ corresponding to $P_i$). We obtain an unbiased estimator of $\tr{P\rho}$ by dividing $\prod_i p_i$ by the product of local noises (see also Fig. \ref{fig:general_idea}).

Formally, we define the \emph{locally biased joint measurement} $\F({\bf x}_1,\ldots,{\bf x}_n):=\bigotimes_{i=1}^n\G_i({\bf x}_i)$, where each $\G_i$ (with outcomes ${\bf x}_i=(x_i,y_i,z_i)$) is of the form given in Eq. (\ref{eq:qubit_jm_biased}), and the noise parameters $\eta^x_i,\eta^y_i$ and $\eta^z_i$, are biased (and independent) for each qubit. The measurement $\F$ is then a joint measurement for the noisy Pauli measurements
\begin{equation}\label{eq:k_Pauli_noisy}
\M_{P}^{\eta}(s_P)=\frac{1}{2}(\id+s_P\eta_PP)\,,
\end{equation}
with $s_P\in\{\pm 1\}$. The noise coefficient $\eta_P:=\prod_{i\in\text{supp}(P)}\eta^{\nu_i(P)}_i$ is a product of local noises dependent on the individual Pauli operators $P_i$, where $\nu_i(P)=x,y,z$ if $P_i=X,Y,Z$, respectively. For example, if $P=X\otimes \id\otimes Z$, then $\eta_P=\eta^x_1\eta^z_3$. The classical post-processing is defined as
\begin{equation}\label{eq:postprocessing_k_qubit}
D(s_P|P,{\bf x})=
\begin{cases}
1 &\mbox{if} \,\,\, s_P=\mu(P)  \,, \\
0 &\mbox{if} \,\,\, s_P=-\mu(P)\,,
 \end{cases}
 \end{equation}
where $\mu: P\mapsto\prod_{i\in\text{supp}(P)}\mu_i(P)$ is the product of the relevant local outcomes, with $\mu_i(P)=x_i,y_i,z_i$ if $P_i=X,Y,Z$, respectively.

 Importantly, the joint measurement of $(\ref{eq:qubit_jm_biased})$ can be implemented via classical randomisation of qubit projective measurements and is therefore projective simulable \cite{oszmaniec17} (see Appendix \ref{A:sim}). In particular, it can be simulated by a uniform mixture of four qubit projective measurements $\P_j(\pm)=\frac{1}{2}(\id\pm\ve_j\cdot\vsigma)$, where $\ve_1=(\eta^x,\eta^y,\eta^z)$, $\ve_2=(\eta^x,\eta^y,-\eta^z)$, $\ve_3=(\eta^x,-\eta^y,\eta^z)$ and $\ve_4=(\eta^x,-\eta^y,-\eta^z)$, with $({\eta^{x}})^2+({\eta^{y}})^2+({\eta^{z}})^2=1$. Consequently, its implementation does not require additional qubits (as suggested by Naimark's dilation theorem \cite{busch}) and is easily realised on a quantum computer. For the unbiased case $\eta^x=\eta^y=\eta^z=\frac{1}{\sqrt{3}}$, an outcome of $\G$ is obtained by randomly performing one of four projective measurement onto opposite vertices of a cube inscribed in the Bloch sphere.
 
We estimate $\tr{P\rho}$ by sampling from the outputs $s_P$ of the  corresponding unsharp measurement $\M^{\eta}_P$ (from Eq. (\ref{eq:k_Pauli_noisy})). Importantly, for any input state $\rho$, the expectation value of $s_P$ equals $\eta_P\tr{P\rho}$ and therefore it is natural to set an unbiased estimator as $\hat{P}=\frac{1}{\eta_P}s_P$. 

The variance of $\hat{P}$ can be easily upper bounded by $\var[\hat P]=\eta_P^{-2}\left(\mathbb{E}[s_P^2]-\mathbb{E}[s_P]^2\right)\leq \eta^{-2}_P$. If we assume uniform noise, then $\eta_P=\eta^{w(P)}$, with $w(P)=|\text{supp}(P)|$. Applying the optimal single-qubit joint measurement on each qubit, i.e. $\eta=\frac{1}{\sqrt{3}}$, we find $\var[\hat P]\leq 3^{w(P)}$.

\section{Joint measurements of Hamiltonians}\label{sec:hamiltonians}

We now apply the locally biased joint measurement $\F$ to estimate the expectation values $\tr{H\rho}$ of a Hamiltonian $H=\sum_{P\in\mathbb{P}_n}\lambda_PP$, written as a linear combination of Pauli strings, with $\lambda_P\in\mathbb{R}$. From the outcomes of our joint measurement on a $n$-qubit quantum state $\rho$, our single shot estimator of $\tr{H\rho}$ is given by 
\begin{equation}\label{eq:est_ham}
\hat H=\sum_{P\in\mathbb{P}_n}\eta^{-1}_P\lambda_Ps_{P}\,,
\end{equation}
where the outcome $s_P\in\{\pm 1\}$ corresponds to the effect of the POVM defined in (\ref{eq:k_Pauli_noisy}). Clearly, the estimator's expectation satisfies $\mathbb{E}[\hat H]=\sum_P\eta^{-1}_P\lambda_P\mathbb{E}[s_P]=\tr{H\rho}$.

To bound the sample complexity of estimating the expectation value of $H$ via our joint measurement, we obtain the following result.
\begin{proposition}\label{prop:var}
The variance of the estimator $\hat H$ (defined in Eq. (\ref{eq:est_ham})) for the Hamiltonian $H$ is given by
\begin{equation}\label{eq:est_var}
\var[\hat H]=\sum_{P,Q\in\mathbb{P}_n}\frac{\eta_{PQ}f(P,Q)}{\eta_P\eta_Q}\lambda_P\lambda_Q\tr{PQ\rho}-(\tr{H\rho})^2\,,
\end{equation}
where $f(P,Q)=\prod_{i=1}^nf_i(P,Q)$, and
\begin{equation*}
f_i(P,Q)=\begin{cases}
1 &\mbox{if} \,\,\, P_i=\id \,\,\text{or} \,\,Q_i=\id \,\,\text{or}\,\, P_i=Q_i  \,, \\
0 &\mbox{otherwise} \,.
 \end{cases}
\end{equation*}
\end{proposition}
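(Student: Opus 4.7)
The plan is to compute the second moment $\mathbb{E}[\hat H^2]$ directly; together with the unbiasedness $\mathbb{E}[\hat H] = \tr{H\rho}$ established just before the proposition, this immediately yields $\var[\hat H] = \mathbb{E}[\hat H^2] - (\tr{H\rho})^2$. Expanding bilinearly gives
\begin{equation*}
\mathbb{E}[\hat H^2] = \sum_{P,Q \in \mathbb{P}_n} \frac{\lambda_P \lambda_Q}{\eta_P \eta_Q}\, \mathbb{E}[s_P s_Q],
\end{equation*}
so the whole task reduces to evaluating the single-shot correlator $\mathbb{E}[s_P s_Q]$, where both outcomes are produced from one realisation of $\F = \bigotimes_i \G_i$ via the post-processing (\ref{eq:postprocessing_k_qubit}).

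Because $\F$ factorises over qubits and $s_P = \prod_{i \in \text{supp}(P)} \mu_i(P)$ is a local function of the outcomes, I push the sum inside the tensor product to obtain
\begin{equation*}
\mathbb{E}[s_P s_Q] = \tr{\left(\bigotimes_{i=1}^n M_i(P,Q)\right)\rho}, \qquad M_i(P,Q) := \sum_{{\bf x}_i} \mu_i(P)\,\mu_i(Q)\, \G_i({\bf x}_i),
\end{equation*}
with the natural convention $\mu_i(P)=1$ whenever $P_i=\id$. The heart of the proof is then a single-qubit calculation: substituting the explicit form (\ref{eq:qubit_jm_biased}) and using the elementary identities $\sum x = \sum xy = \sum xyz = 0$ and $\sum x^2 = 8$ over $\{\pm 1\}^3$, a short case analysis on $(P_i,Q_i)$ yields $M_i(P,Q)=\id$ when $P_i=Q_i$ (both trivial or both equal non-identity Paulis), $M_i(P,Q)=\eta_i^{\nu_i(P)}P_i$ when $Q_i=\id\neq P_i$ and symmetrically, and $M_i(P,Q)=0$ when $P_i,Q_i$ are distinct non-identity Paulis.

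The first two outcomes are exactly the cases where $f_i(P,Q)=1$, while the vanishing case corresponds to $f_i(P,Q)=0$. In every surviving case the operator factor on qubit $i$ equals $(PQ)_i$ (using $P_iQ_i=\id$ when the two coincide), and the accompanying scalar is either $\eta_i^{\nu_i(P)}$ or $\eta_i^{\nu_i(Q)}$, whichever of the two is the non-identity one. Taking the tensor product therefore collapses the operator part to $f(P,Q)\,PQ$, and the scalars multiply to $\prod_{i\in\text{supp}(PQ)}\eta_i^{\nu_i(PQ)}=\eta_{PQ}$, because $\text{supp}(PQ)$ is precisely the set of qubits on which exactly one of $P_i,Q_i$ is non-identity and on such a qubit $\nu_i(PQ)$ coincides with the $\nu_i$ of that non-identity factor. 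Plugging $\mathbb{E}[s_P s_Q] = f(P,Q)\,\eta_{PQ}\,\tr{PQ\rho}$ back into the expansion of $\mathbb{E}[\hat H^2]$ and subtracting $(\tr{H\rho})^2$ produces (\ref{eq:est_var}). The main obstacle is clerical rather than conceptual: one must carefully track which noise parameter $\eta_i^{\nu_i}$ appears on each qubit in each surviving case and verify that they multiply precisely to $\eta_{PQ}$.
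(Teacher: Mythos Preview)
Your argument is correct and follows the same strategy as the paper: reduce to computing $\mathbb{E}[s_Ps_Q]$, express it as $\tr{T_{P,Q}\rho}$ with $T_{P,Q}=\sum_{\bf x}\mu(P)\mu(Q)\F({\bf x})$, and perform a qubit-wise case analysis on $(P_i,Q_i)$. The only difference is organizational---you exploit the tensor-product structure up front to reduce to a single-qubit operator $M_i(P,Q)$, whereas the paper expands $\F$ globally over all Pauli strings $R$ and then argues that only $R=PQ$ survives; both lead to the same case distinctions and the same bookkeeping of the $\eta_i^{\nu_i}$ factors.
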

The noise coefficient is defined as $\eta_{PQ}=\prod_{i\in\text{supp}(PQ)}\eta_i^{\nu_i(PQ)}$, where $\nu_i(PQ)$ ignores the phase of $PQ$ and acts as defined earlier. We present the proof in Appendix \ref{A:proof}, which involves calculating $\var[\hat H]=\mathbb{E}[\hat H^2]-(\tr{H\rho})^2$, where $\mathbb{E}[\hat H^2]=\sum_{P,Q\in\mathbb{P}_n}\frac{\lambda_P\lambda_Q}{\eta_P\eta_Q}\mathbb{E}[s_Ps_Q]$. The expectation $\mathbb{E}[s_Ps_Q]$ can be evaluated from the joint measurement $\M^{\eta}_{P,Q}(s_P,s_Q)=\sum_{{\bf x}}D(s_P|P,{\bf x})D(s_Q|Q,{\bf x})\F({\bf x})$. For example, if $P=X\otimes \id$ and $Q=Y\otimes Y$ such that $PQ=XY\otimes Y$, we have, in the uniform noise case, $\M^{\eta}_{P,Q}(s_P,s_Q)=\frac{1}{4}(\id+s_P\eta P+s_Q\eta^2 Q)$, and a simple calculation yields $\mathbb{E}[s_Ps_Q]=0$. On the other hand, if $P=X\otimes \id$ and $Q=X\otimes X$, then $\M^{\eta}_{P,Q}(s_P,s_Q)=\frac{1}{4}(\id+s_P\eta P+s_Q\eta^2 Q+s_Ps_Q\eta PQ)$, and it follows that $\mathbb{E}[s_Ps_Q]=\eta\tr{PQ\rho}$. Thus, the variance of the estimator, and ultimately the performance of the scheme, depends on the two observable marginals of the joint measurement $\F$, for all $P,Q\in\mathbb{P}_n$.

\section{Connections to classical shadows}\label{sec:connections}

A single round of a classical shadow protocol consists of three stages \cite{huang20}. First, a quantum state is transformed via a unitary transformation, $\rho\mapsto U\rho U^{\dagger}$, where $U$ is chosen randomly from an ensemble $\mathcal{U}$. This is followed by a measurement in the computational basis $\{\ket{e}:e\in\{0,1\}^n\}$. Finally, the unitary $U^{\dagger}$ is applied to the post-measurement state $\ket{\hat{e}}$, i.e., $\kb{\hat e}{\hat e}\mapsto U^{\dagger}\kb{\hat e}{\hat e}U$. 

In expectation (over the unitary ensemble and measurement outcomes), this randomised measurement procedure can be described by a quantum channel $\mathcal{M}:\rho\mapsto\mathbb{E}_{U\sim\mathcal{U}}\sum_{e\in\{0,1\}^k}p(e)U^{\dagger}\kb{e}{e}U$, where $p(e)=\bra{e}U\rho U^{\dagger}\ket{e}$. Assuming $\mathcal{M}$ is invertible and that in the $\ell$--th experimental round, $U^{(\ell)}$ is applied and measurement outcome $\ket{\hat e^{(\ell)}}$ is obtained, a \emph{classical snapshot} is defined as $\hat \rho^{(\ell)}=\mathcal{M}^{-1}(U^{\dagger,{(\ell)}}\kb{\hat e^{(\ell)}}{\hat e^{(\ell)}}U^{(\ell)})$. A collection of $N$ such snapshots define the \emph{classical shadow} $S(\rho\,;N)=\{\hat \rho^{(1)},\hat\rho^{(2)},\ldots,\hat\rho^{(N)}\}$. Note that while $\tr{\hat \rho^{(\ell)}}=1$, the snapshot $\hat\rho^{(\ell)}$ is not necessarily positive semidefinite but produces the desired state in expectation, i.e., $\mathbb{E}[\hat \rho]=\mathcal{M}^{-1}(\mathbb{E}[U^{\dagger}\kb{\hat e}{\hat e}U])=\mathcal{M}^{-1}(\mathcal{M}(\rho))=\rho$.

The above formalism can be applied to predict many properties of the quantum state. For example, for any collection of observables $O_1,\ldots,O_m$, the function $\hat O^{sh}_j=\tr{O_j\hat\rho}$ is an unbiased estimator of  $\tr{O_j\rho}$. The sample complexity can be found by bounding the variance of $\hat O^{sh}_j$ with the \emph{shadow norm} \cite{huang20}, i.e., $\var[\hat O^{sh}_j]\leq {\no{O_j}}_{\text{sh}}^2$, as defined in Appendix \ref{A:lbcs}.

Two cases considered in \cite{huang20} construct classical shadows via random local qubit or global $n$-qubit Clifford measurements, and rely on the 3-design property of the Clifford group to compute the shadow norm. In the former case, the measurement procedure is equivalent to performing random Pauli measurements on each qubit, and the classical shadow has the form 
\begin{equation}\label{eq:cs}
\hat \rho=\bigotimes_{i=1}^n\left(3U_i^{\dagger}\kb{\hat e_i}{\hat e_i}U_i-\id\right)\,.
\end{equation}
For an arbitrary Pauli string $P\in\mathbb{P}_n$, the shadow norm is given by $\no{P}_{\text{sh}}^2=3^{w(P)}$.

A modified \emph{locally biased} classical shadow approach for the product Clifford ensemble is described in \cite{hadfield20} and is applied to estimate expectation values of Hamiltonians $H=\sum_{P}\lambda_PP$. Rather than implementing uniformly random Pauli measurements on each qubit, each Pauli $P_i\in\{X,Y,Z\}$ is randomly selected according to a probability distribution $\beta_i(P_i)$. Surprisingly, while the estimator $\hat H^{sh}$ of this protocol differs from $\hat H$ of Eq. (\ref{eq:est_ham}), we observe the following relation between the two.

\begin{observation}\label{obs:lbcs}
The variance of the locally biased classical shadow estimator in \cite{hadfield20} is equivalent to the variance of the joint measurability estimator in Prop. \ref{prop:var}.
\end{observation}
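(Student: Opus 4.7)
The plan is to reduce the statement to an equality of second moments and then verify that equality by direct per-qubit computation. Since both $\hat H$ and $\hat H^{sh}$ are unbiased and linear in single-Pauli estimators with coefficients $\lambda_P$, the two variances agree iff $\mathbb{E}[\hat P \hat Q] = \mathbb{E}[\hat P^{sh}\hat Q^{sh}]$ for every pair $P,Q\in\mathbb{P}_n$. By the calculation underlying Prop.~\ref{prop:var}, the joint-measurement side equals $\eta_{PQ}f(P,Q)\tr{PQ\rho}/(\eta_P\eta_Q)$, so the goal is to reproduce this expression from the biased-shadow estimator under an appropriate identification of parameters. The natural dictionary is $\beta_i(\nu) = (\eta_i^\nu)^2$, which sends the joint-measurability boundary condition $\sum_\nu(\eta_i^\nu)^2 = 1$ to the normalisation of the biasing distribution $\beta_i$ on each qubit.

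Using the locally biased protocol of~\cite{hadfield20}, the single-shot shadow estimator factorises as $\hat P^{sh} = \prod_{i\in\text{supp}(P)}\hat s_i\,\mathbbm{1}[b_i = \nu_i(P)]/\beta_i(\nu_i(P))$, where $b_i$ is the basis chosen on qubit $i$ with probability $\beta_i(b_i)$ and $\hat s_i\in\{\pm 1\}$ is the measurement outcome. I would then expand $\hat P^{sh}\hat Q^{sh}$ and partition the qubits by whether each of $P_i, Q_i$ is identity or not: on qubits where $P_i, Q_i$ are distinct non-identity Paulis the two indicators conflict and kill the product, which is precisely the case $f(P,Q)=0$; on qubits with $P_i = Q_i \ne \id$ the sign factors square to one and the indicators coincide, leaving $\mathbbm{1}[b_i = \nu_i(P)]/\beta_i(\nu_i(P))^2$; qubits with exactly one non-identity Pauli contribute $\hat s_i\,\mathbbm{1}[b_i = \nu_i(PQ)]/\beta_i(\nu_i(PQ))$; qubits outside $\text{supp}(P)\cup\text{supp}(Q)$ contribute $1$.

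Taking expectations, the independence of the basis draws turns each surviving indicator into a factor $\beta_i(\nu)$ that cancels one power in the denominator, while the residual signs $\hat s_i$ carried by the qubits in $\text{supp}(PQ)$ combine through $\rho$ into $\tr{PQ\rho}$, with no stray phases since the only Pauli squaring occurs at qubits with $P_i = Q_i$ and yields a clean $+\id$. This gives $\mathbb{E}[\hat P^{sh}\hat Q^{sh}] = f(P,Q)\prod_{i:\,P_i=Q_i\ne\id}\beta_i(\nu_i(P))^{-1}\tr{PQ\rho}$. A short per-qubit inspection shows that $\eta_{PQ}/(\eta_P\eta_Q)$ equals $\prod_{i:\,P_i=Q_i\ne\id}(\eta_i^{\nu_i(P)})^{-2}$ with all other qubits contributing $1$; under $\beta_i(\nu)=(\eta_i^\nu)^2$ this matches the shadow expression, establishing equality of second moments and hence of variances. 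The main obstacle is the careful bookkeeping across the four qubit classes and confirming that the $f(P,Q)=1$ condition excludes any phase contamination in $\tr{PQ\rho}$ — the condition itself guarantees this, since the only product of non-identity Paulis that occurs on a single qubit is of equal factors.
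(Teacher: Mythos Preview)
Your argument is correct. The paper's proof uses the same identification $\beta_i(\nu)=(\eta_i^\nu)^2$ and the same per-qubit case split, but rather than computing $\mathbb{E}[\hat P^{sh}\hat Q^{sh}]$ from scratch it simply quotes the variance formula $\var[\hat H^{sh}]=\sum_{P,Q}g(P,Q)\lambda_P\lambda_Q\tr{PQ\rho}-(\tr{H\rho})^2$ from \cite{hadfield20} and then verifies $g(P,Q)=\eta_{PQ}f(P,Q)/(\eta_P\eta_Q)$ qubit by qubit. Your route is therefore a self-contained re-derivation of that cited second-moment formula followed by the same coefficient matching; it buys independence from the external reference at the cost of spelling out the indicator/sign bookkeeping, but the two arguments are otherwise identical in structure.
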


This is shown explicitly in Appendix \ref{A:lbcs}, and requires $(\eta^x_i)^2$, $(\eta^y_i)^2$ and $(\eta^z_i)^2$ to be set as the probabilities $\beta_i(P_i)$ of sampling $X$, $Y$ and $Z$, respectively, for each qubit system. It follows that both approaches have the same sample complexity bounds. For a given Hamiltonian, strategies are provided in \cite{hadfield20,hadfield21} to minimise the variance by optimising over the probabilities $\beta_i(P_i)$. These techniques can also be applied directly to optimise the joint measurement.

We now highlight some further connections when joint measurability is viewed in the framework of classical shadows and vice versa.
\begin{observation}
From the outcomes of the joint measurement $\F$ we can construct a locally biased classical shadow. In the unbiased setting this has a similar form (and the same performance) as the shadow of
Eq. (\ref{eq:cs}).
\end{observation}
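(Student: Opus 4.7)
The plan is to read each run of the joint measurement $\F$ as classical data that defines a snapshot $\hat\rho$ of the unknown state, and then verify the defining property $\mathbb{E}[\hat\rho]=\rho$. A single execution of $\F=\bigotimes_{i=1}^n\G_i$ yields outcomes $\vx_i=(x_i,y_i,z_i)\in\{\pm 1\}^3$ for each qubit, and I propose the local snapshot
\begin{equation*}
\hat\rho_i=\frac{1}{2}\!\left(\id+\frac{x_i}{\eta^x_i}X+\frac{y_i}{\eta^y_i}Y+\frac{z_i}{\eta^z_i}Z\right),
\end{equation*}
with the global snapshot $\hat\rho=\bigotimes_{i=1}^n\hat\rho_i$. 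The guiding intuition is that the single-Pauli marginal of $\G_i$ is the unsharp observable (\ref{eq:k_Pauli_noisy}), so rescaling each binary outcome by the relevant noise parameter already inverts the measurement channel on a per-axis basis.

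To verify that $\hat\rho$ is a valid classical shadow, I check $\mathbb{E}[\hat\rho]=\rho$. Since both $\F$ and $\hat\rho$ factorise across qubits, a direct computation gives
\begin{equation*}
\tr{P\hat\rho}=\prod_{i\in\text{supp}(P)}\frac{\mu_i(P)}{\eta^{\nu_i(P)}_i}=\frac{s_P}{\eta_P}
\end{equation*}
for every Pauli string $P\in\mathbb{P}_n$, which is exactly the unbiased estimator $\hat P$ of $\tr{P\rho}$ from Section~\ref{sec:protocol}. Taking expectations yields $\mathbb{E}[\tr{P\hat\rho}]=\tr{P\rho}$, and since the Pauli strings span operator space, linearity of $\rho\mapsto\mathbb{E}[\hat\rho]$ gives $\mathbb{E}[\hat\rho]=\rho$. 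The rescaling by $(\eta^x_i,\eta^y_i,\eta^z_i)$ therefore plays the role of the inverse measurement channel $\mathcal{M}^{-1}$ in the classical shadow framework, and the construction defines a locally biased shadow whenever the $\eta^{\cdot}_i$ are qubit-dependent.

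For the comparison with (\ref{eq:cs}) in the unbiased case $\eta^x_i=\eta^y_i=\eta^z_i=1/\sqrt{3}$, a short algebraic rewriting of $\hat\rho_i$ produces
\begin{equation*}
\hat\rho_i=3\kb{+_{\vn_i}}{+_{\vn_i}}-\id,\qquad \vn_i=\tfrac{1}{\sqrt{3}}(x_i,y_i,z_i),
\end{equation*}
where $\ket{+_{\vn_i}}$ is the $+1$ eigenstate of $\vn_i\cdot\vsigma$ and $\vn_i$ points to one of the eight vertices of the cube inscribed in the Bloch sphere. This matches the form of (\ref{eq:cs}), with the axis-aligned Pauli eigenprojector $U_i^\dagger\kb{\hat e_i}{\hat e_i}U_i$ replaced by a cube-vertex projector. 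Because $\tr{P\hat\rho}=s_P/\eta_P$ is literally the same random variable analysed in Section~\ref{sec:protocol}, the single-shot variance satisfies $\var[\tr{P\hat\rho}]\leq 3^{w(P)}=\no{P}_{\mathrm{sh}}^2$, and Observation~\ref{obs:lbcs} lifts this to identical performance for any Hamiltonian estimator.

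There is no serious technical obstacle; the argument is essentially a change of viewpoint on the outputs of $\F$. The only point that must be handled with care is the per-axis rescaling $x_i/\eta^x_i$, $y_i/\eta^y_i$, $z_i/\eta^z_i$, rather than a single global prefactor: this is what preserves unbiasedness in the locally biased regime and is also what makes the algebraic identification with the cube-vertex rewriting of (\ref{eq:cs}) go through in the unbiased limit.
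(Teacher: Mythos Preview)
Your proposal is correct and follows essentially the same route as the paper: you define the same product snapshot $\hat\rho_i=\tfrac12(\id+\ve_i\cdot\vsigma)$ with $\ve_i=(x_i/\eta^x_i,y_i/\eta^y_i,z_i/\eta^z_i)$, verify unbiasedness via the Pauli expansion $\tr{P\hat\rho}=s_P/\eta_P$, and in the uniform case rewrite $\hat\rho_i=3\rho_{\tilde\ve_i}-\id$ with $\tilde\ve_i$ a cube vertex, matching the form of Eq.~(\ref{eq:cs}) and inheriting the same variance bounds through Observation~\ref{obs:lbcs}. The only cosmetic difference is that the paper also records the general locally biased form $\hat\rho_i=\no{\ve_i}\rho_{\tilde\ve_i}+\tfrac12(1-\no{\ve_i})\id$ before specialising, but this is not needed for the claim.
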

For each Pauli string $P\in\mathbb{P}_n$, we construct the product $\mu(P):=\prod_{i\in\text{supp}(P)}\mu_i(P)$ from the outcomes of $\F$. A single shot classical approximation of the quantum state $\rho$ is given by
\begin{equation}
\hat \rho^{\,\text{JM}}=\frac{1}{2^n}\sum_{P\in\mathbb{P}_n}\eta^{-1}_P\mu(P)P=\bigotimes_{i=1}^n\frac{1}{2}(\id+ \ve_i\cdot\vsigma)\,,
\end{equation}
where $\ve_i=(\frac{x_i}{\eta^{x}_i},\frac{y_i}{\eta^{y}_i},\frac{z_i}{\eta^{z}_i})$ and ${\no { \ve_i}}^2=({\eta^{x}_i})^{-2}+({\eta^{y}_i})^{-2}+({\eta^{z}_i})^{-2}$. It follows (see Appendix \ref{A:cs_jm}) that for the $i$-th qubit, we have $\hat \rho^{\,\text{JM}}_i={\no { \ve_i}}\rho_{\tilde \ve_i}+\frac{\id}{2}(1-{\no { \ve_i}})$, where $\tilde\ve_i=\ve_i/\no{\ve_i}$ and $\rho_{\tilde \ve_i}=\frac{1}{2}(\id+\tilde \ve_i\cdot \vsigma)$. If we consider uniform noise such that ${\no { \ve_i}}=\frac{\sqrt{3}}{\eta}$ and take $\eta=\frac{1}{\sqrt{3}}$, the expression simplifies to $\hat \rho^{\,\text{JM}}_i=3\rho_{\tilde \ve_i}-\id$, which has a similar form to the classical shadow (\ref{eq:cs}) but $\rho_{\tilde \ve_i}$ is no longer a Pauli eigenstate. Note also that both shadows give the same variance for the estimators.

\begin{observation}
Any classical shadow defines a joint measurement and provides a sufficient condition for the compatibility of an arbitrary set of measurements.
\end{observation}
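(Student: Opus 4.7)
The plan is to identify the POVM implicit in any classical shadow protocol and then exhibit explicit post-processings that yield unsharp versions of an arbitrary target set of observables. First I would observe that a single round of the shadow protocol---drawing $U\sim\mathcal U$ followed by a computational-basis measurement---implements the POVM $\G$ on $\mathcal H$ with outcomes $(U,e)$ and effects
\begin{equation}
\G(U,e) = p(U)\, U^\dagger\kb{e}{e}U.
\end{equation}
Its normalisation $\sum_{U,e}\G(U,e)=\id$ follows immediately from $\sum_e U^\dagger\kb{e}{e}U=\id$ and $\sum_U p(U)=1$, with continuous ensembles handled by the obvious integrals.

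Next, for a target set of dichotomic observables $O_1,\ldots,O_m$ with $O_j^2\leq\id$, I would build a stochastic post-processing of $\G$ from the unbiased shadow estimator $\hat o_j(U,e):=\tr{O_j\,\mathcal M^{-1}(U^\dagger\kb{e}{e}U)}$ by setting
\begin{equation}
D(\pm|j,U,e)=\tfrac{1}{2}\bigl(1\pm\hat o_j(U,e)/C_j\bigr),
\end{equation}
where $C_j\geq \max_{U,e}|\hat o_j(U,e)|$ is chosen to guarantee $D(\pm|j,U,e)\in[0,1]$. A short linearity-plus-self-adjointness computation gives $\sum_{U,e}\hat o_j(U,e)\,p(U)\,U^\dagger\kb{e}{e}U=\mathcal M(\mathcal M^{-1}(O_j))=O_j$, and hence
\begin{equation}
\sum_{U,e}D(\pm|j,U,e)\,\G(U,e)=\tfrac{1}{2}\bigl(\id\pm O_j/C_j\bigr).
\end{equation}
These are exactly the unsharp dichotomic observables $\tilde O_j$ with noise coefficient $1/C_j$, and by Eq.~(\ref{eq:postprocessing}) they admit $\G$ as a parent POVM and are therefore jointly measurable. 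The extension to arbitrary POVMs is obtained by applying the construction effect-by-effect.

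The sufficient compatibility condition is then read off directly: any family of unsharp observables $\{\tfrac{1}{2}(\id\pm\eta_j O_j)\}_j$ with $\eta_j\leq 1/C_j$ is jointly measurable via $\G$. The step I expect to be the main obstacle is the worst-case bound $C_j\geq\max_{U,e}|\hat o_j(U,e)|$: because $\mathcal M^{-1}$ can substantially amplify the operator norm of its argument (as happens for the local Clifford ensemble, where one picks up a factor of $3$ per non-identity qubit), careful use of the ensemble's structure---the same structure that controls the shadow norm---is needed to obtain tight admissible noise levels and thus the strongest sufficient compatibility condition implied by the shadow.
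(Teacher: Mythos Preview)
Your argument is correct and follows the same underlying idea as the paper: identify the randomised shadow protocol with a parent POVM $\G(U,e)=p(U)\,U^\dagger\kb{e}{e}U$, and use the classical snapshot to build a stochastic post-processing whose marginals are unsharp versions of the target observables. The paper's presentation differs in two respects worth noting. First, rather than restricting to dichotomic observables and rescaling by $C_j=\max_{U,e}|\hat o_j(U,e)|$, the paper works directly with an arbitrary POVM $\M_j$ and the depolarising noise model $\M_j^\eta(s)=\eta\M_j(s)+(1-\eta)\tr{\M_j(s)}\id/d$, taking as post-processing $q(s|j,x,U)=\tr{\M_j^\eta(s)\hat\rho_{x,U}}$; this avoids your ``effect-by-effect'' extension step, which as stated is a little loose (applying the dichotomic construction separately to each effect does not automatically yield a normalised POVM without taking a common noise level). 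Second, the paper specialises to unitary ensembles forming a 2-design, so that $\hat\rho_{x,U}=(d+1)U^\dagger\kb{x}{x}U-\id$ has an explicit form, and the positivity condition $q\geq 0$ can be solved to give the concrete universal bound $\eta\leq 1/(d+1)$---precisely the quantitative content you flag as the ``main obstacle'' in your final paragraph. Your route is slightly more general (it needs only invertibility of $\mathcal M$ and its self-adjointness, not the 2-design property), at the cost of leaving the admissible noise level implicit in $C_j$.
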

Suppose a classical shadow on $\mathcal{H}$ is constructed from a (global) unitary ensemble $\mathcal{U}$ which constitutes a 2-design. The randomised measurement procedure can be described by a single POVM,
$\G(x,U)=\frac{1}{|\mathcal{U}|}U^{\dagger}\kb{x}{x}U$, where $U\in\mathcal{U}$ and $x\in\{0,1\}^n$. As a consequence of the 2-design property, each classical snapshot is given by $\hat \rho_{x,U}=(d+1)U^{\dagger}\kb{x}{x}U-\id$ \cite{huang20}. While $\hat \rho_{x,U}$ is not necessarily positive semidefinite, it has $\tr{\hat\rho_{x,U}}=1$ and satisfies $\mathbb{E}[\hat\rho_{x,U}]=\rho$. For a POVM $\M_j$ we can compute $q(s|j,x,U)=\tr{\M_j(s)\hat\rho_{x,U}}$,
which, in expectation, yields the outcome statistics of the measurement $\M_j$.

To determine which observables can be measured jointly from classical shadows, we require that $q(s|j,x,U)$ is a classical post-processing function, i.e., $\tr{\M_j^{\eta}(s)\hat \rho_{x,U}}\geq 0$, where $\M_j^{\eta}(s)=\eta\M_j(s)+(1-\eta)\frac{\tr{\M_j(s)}}{d}\id$. For the set of all observables this holds if and only if $\eta\leq\frac{1}{d+1}$ (see Appendix \ref{A:cs_jm}). Thus, we can simulate the measurements $\M^{\eta}_j$ for $\eta\leq\frac{1}{d+1}$, from the classical post-processing $q(s|j,x,U)$. Improved bounds can be achieved if $\tr{\M_j(s)U^{\dagger}\kb{x}{x}U}>0$ for all $j$. While the general condition does not give the exact joint measurability region for an arbitrary collection of observables \cite{heinosaari15b}, the classical shadow $\hat\rho^{\,\text{JM}}_i=3\rho_{\tilde \ve_i}-\id$ constructed from the joint measurement of Pauli observables yields the precise incompatibility robustness threshold. Note that classical shadows can also be constructed from informationally complete measurements \cite{atihi21}, for which the above analysis would also apply.

\section{Noisy joint measurements}\label{sec:noise}

The unbiased classical shadow protocol has been adapted to incorporate noise by applying a fixed quantum channel $\Lambda$ to the state after a unitary transformation $U\in\mathcal{U}$, i.e., $U\rho U^{\dagger}\mapsto \Lambda(U\rho U^{\dagger})$, followed by the usual measurement in the computational basis \cite{koh20,chen21}. 
When the unitary ensemble $\mathcal{U}$ is the product Clifford ensemble, and $\Lambda$ describes uncorrelated product noise, the corresponding shadow norm can be calculated explicitly \cite{koh20}. 

This noise model can be equivalently seen as a noisy measurement (described by the dual (unital) CP map  $\Lambda^{*}$ acting on the standard measurement) performed on a perfectly prepared state. Given that the scheme already assumes implementation of perfect product Clifford gates, it is natural to allow implementation of arbitrary noiseless single-qubit unitaries prior to the noisy measurement. We note that if the noise is uncorrelated, one can design unitaries that (on each qubit) transform the noise channel to stochastic (classical) noise (see Appendix~\ref{A:noise_stochastic}).
The resulting single-qubit measurements have the form $\M(0)=\alpha\kb{0}{0}+(1-\beta)\kb{1}{1}$ and $\M(1)=\id-\M(0)$, with $\alpha,\beta \in \left[1/2,1\right]$, which gives a modified single-qubit shadow norm $\no{P}_{\text{sh},\,\Lambda}=(\sqrt{3}/(\alpha+\beta-1))$. 
It is worth noting that stochastic readout noise is indeed one of the main sources of errors in modern quantum devices (see, e.g., \cite{chen19,maciejewski21,bravyi21})

Crucially, our joint measurability scheme can easily incorporate uncorrelated readout noise. To this end, we  find strategies of implementing qubit POVMs via classical randomisation of noisy qubit projective measurements and post-processing. This class of measurements is then used to construct parent POVMs for unsharp versions of Pauli observables. Specifically, we modify the semidefinite program (SDP) derived in \cite{oszmaniec17} which characterises qubit projective simulability to include readout noise (see Appendix \ref{A:noise}), and then incorporate the constraint into the standard joint measurability SDP \cite{guhne21}.

\begin{table}[t!]
\begin{center}
\begin{tabular}{c|ccccc}
\textbf{Encoding / Molecule} & & $\text{H}_2$ & LiH   & Be$\text{H}_2$ & $\text{H}_2$O
\\ \hline
Jordan-Wigner       &  & 1.00    & 0.06 & 0.04           & 0.1          
\\ \hline
Bravyi-Kitaev      &   & 0.13          & 0.78 & 0.55           & 0.61          
\\ \hline
Parity        &  & 0.38          & 0.02 & 0.009           & 0.02          
\end{tabular}
\end{center}
\caption{\label{table_benchmarks}Relative upper bounds on the variance of the estimators for popular quantum chemistry Hamiltonians (with different encodings) in the presence of readout noise, using an optimised joint measurability strategy (see Appendix~\ref{A:optimization_numerics} for details on the optimisation heuristics). The quantities for each molecule and encoding are normalised by the value of the variance upper bound for the noisy classical shadows strategy of the given molecule and encoding (see Appendix~\ref{A:optimization_numbers} for explicit results).
}
\end{table}

We use the above observations to compare the performance of classical shadows and joint measurability schemes when applied to the energy estimation of popular quantum chemistry Hamiltonians in the presence of readout noise.
We note that our methods allow for easily incorporating noise into \emph{biased} strategies, where $\eta^{x,y,z}_i$ are weighted according to the frequency of certain Pauli operators in the Hamiltonian, whereas only \emph{unbiased} noisy classical shadows have been developed so far \cite{koh20,chen21}.
Thus, we can optimise the measurement strategy for each Hamiltonian (using heuristics developed in Appendix~\ref{A:optimization_numerics}).

The results of benchmarks for molecules and encodings \cite{bravyi17,kandala17,hadfield20} are shown in Table~\ref{table_benchmarks}. For each strategy, we calculate the upper bound on the variance (see Eq.~\eqref{eq:est_var} for the joint measurability strategy or corresponding expressions for noisy classical shadows in \cite{chen21}).
As a noise model, we use the noise data from (the most noisy) subsystems of IBM's Washington 126-qubit quantum device. Interestingly, even in the presence of noise, the unbiased joint measurability strategy gives exactly the same performance as classical shadows. In contrast, optimised strategies allow us to obtain a reduction of the variance upper bound by as much as a factor of $\approx 100$.

\

\section{Concluding remarks}\label{sec:conclusion}

In this work we have applied a simple joint measurability strategy, implemented locally, to simultaneously estimate expectation values of collections of incompatible observables. We apply this technique to estimate energies of quantum Hamiltonians and derive bounds on the sample complexity of the protocol. The application of joint measurability to quantum computing problems, as well as the connections to classical shadows, open new research questions to explore. In particular, are there deeper fundamental connections between shadows and joint measurements? Can we gain further insight into the efficiency of computational tasks from the limits of joint measurability, or, conversely, can we construct optimal joint measurements from the performance limits of classical shadows? Furthermore, is joint measurability a useful strategy in other quantum computing applications? In general, this work motivates further studies of measurement incompatibility, especially the characterisation of optimal joint measurement schemes which are projective (or noisy projective) simulable.

\emph{Note added}.---Upon completion of this work we became aware of independent results by Nguyen \emph{et al.} \cite{nguyen22}, which offer a complementary perspective on connecting measurement theory with classical shadows.

\section*{Acknowledgements} 
We thank Ingo Roth for insightful discussions regarding the nature and mathematics of classical shadows. The authors acknowledge financial support from the TEAM-NET project co-financed by the EU within the Smart Growth Operational Program (contract no. POIR.04.04.00-00-17C1/18-00).

\appendix

\section{Joint measurability}\label{A:jm}
We recall two definitions of joint measurability and provide a simple proof that both are equivalent. We also state the definition of incompatibility robustness which quantifies the incompatibility of a set of observables.
\begin{definition}\label{Adef1}
A finite collection of measurements $\M_1,\ldots, \M_m$ is \emph{jointly measurable} (compatible) if there exists a joint measurement $\G$ defined on the product outcome space $\Omega_{\M_1}\times\ldots\times\Omega_{\M_m}$ such that
\begin{eqnarray}\label{Aeq:marginals}
 \sum_{s_2,\ldots,s_m} \G(s_1,\ldots,s_m)&=&\M_{1}(s_1)\,,\nonumber\\
 &\vdots&\nonumber\\
 \sum_{s_1,\ldots,s_{m-1}} \G(s_1,\ldots,s_m)&=&\M_{m}(s_m)\,.
\end{eqnarray}
\end{definition}
Thus, measurements are compatible if their effects constitute the marginals of a single measurement $\G$, otherwise they are \emph{incompatible}. For measurements with finite outcome sets, there are several equivalent formulations of joint measurability \cite{ali09,busch}, one of which incorporates classical post-processing.
\begin{definition}\label{Adef2}
A finite collection of measurements $\M_1,\ldots, \M_m$ is \emph{jointly measurable} if there exists a measurement $\F$ with outcome set $\Omega_{\F}$ such that the effects $\M_j(s_j)$ can be obtained from $\F$ via the classical post-processing
\begin{equation}\label{Aeq:postprocessing}
\M_{j}(s_j)=\sum_{\lambda\in\Omega_{\F}} D(s_j|j,\lambda)\F(\lambda)\,,
\end{equation}
where $0\leq D(s_j|j,\lambda)\leq 1$ and $\sum_{s_j}D(s_j|j,\lambda)=1$, for all $j=1,\ldots,m$. 
\end{definition}
We now provide a simple proof that both definitions are equivalent in the finite dimensional setting.
\begin{proposition}
For finite outcome measurements, the two definitions of joint measurability are equivalent.
\end{proposition}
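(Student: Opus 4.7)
The proof has two directions; both are essentially bookkeeping exercises once one chooses the right product structure, so the plan is to exhibit explicit constructions in each direction and verify the POVM axioms and marginal/post-processing conditions.

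For the direction Definition~\ref{Adef1} $\Rightarrow$ Definition~\ref{Adef2}, I would simply take $\F=\G$ with outcome set $\Omega_{\F}=\Omega_{\M_1}\times\cdots\times\Omega_{\M_m}$, and choose the deterministic post-processing
\begin{equation*}
D(s_j|j,\lambda)=\delta_{s_j,\lambda_j},
\end{equation*}
where $\lambda_j$ denotes the $j$-th component of $\lambda=(\lambda_1,\ldots,\lambda_m)$. Plugging this into Eq.~(\ref{Aeq:postprocessing}) reduces the sum to exactly the marginal sum appearing in Eq.~(\ref{Aeq:marginals}), yielding $\M_j(s_j)$. Non-negativity and normalisation of $D$ are immediate.

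For the converse, given $\F$ and the post-processings $D(\cdot|j,\lambda)$, I would define a candidate joint POVM on the product outcome space by
\begin{equation*}
\G(s_1,\ldots,s_m)=\sum_{\lambda\in\Omega_{\F}}\Bigl(\prod_{j=1}^{m}D(s_j|j,\lambda)\Bigr)\F(\lambda).
\end{equation*}
Positivity is clear since each factor is non-negative and $\F(\lambda)\geq 0$. For normalisation, I would factor the sum over $(s_1,\ldots,s_m)$ into a product of sums $\sum_{s_j}D(s_j|j,\lambda)=1$, leaving $\sum_\lambda\F(\lambda)=\id$. To verify the marginal property in Eq.~(\ref{Aeq:marginals}), I would sum over all $s_k$ with $k\neq j$; the same factorisation collapses each such sum to $1$, leaving $\sum_\lambda D(s_j|j,\lambda)\F(\lambda)$, which equals $\M_j(s_j)$ by hypothesis.

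Neither direction involves a genuine obstacle: the only step requiring any care is the algebraic manipulation in the converse, where one must correctly exchange the order of summation and use the independence of the post-processings across different $j$. The finiteness of the outcome sets is used implicitly to freely interchange the finite sums; no continuity or measure-theoretic argument is needed. I would close by remarking that the construction in the converse direction produces a \emph{product-outcome} joint POVM whose marginals literally recover the $\M_j$, so the two formulations are interchangeable in practice.
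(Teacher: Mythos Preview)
Your proposal is correct and follows essentially the same approach as the paper: the forward direction takes $\F=\G$ with the coordinate-projection post-processing, and the converse builds the product-outcome joint POVM $\G(s_1,\ldots,s_m)=\sum_\lambda\bigl(\prod_j D(s_j|j,\lambda)\bigr)\F(\lambda)$ and verifies the marginals. The paper phrases the argument for $m=2$ and leaves the first direction as ``clearly'', whereas you spell out both directions explicitly and in full generality, but the underlying construction is identical.
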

\begin{proof} For simplicity we show this for a pair of measurements $\M_1$ and $\M_2$, but the proof generalises straightforwardly to any finite collection of measurements. Clearly, the existence of a joint measurement $\G$ such that $\sum_{s_2}\G(s_1,s_2)=\M_1(s_1)$ and $\sum_{s_1}\G(s_1,s_2)=\M_2(s_2)$, describes a classical post-processing of the form (\ref{Aeq:postprocessing}). Conversely, given a measurement $\F$ such that
\begin{eqnarray}
\M_{1}(s_1)&=&\sum_{\lambda\in\Omega_{\F}} D(s_1|1,\lambda)\F(\lambda)\,,\\
\M_{2}(s_2)&=&\sum_{\lambda\in\Omega_{\F}} D(s_2|2,\lambda)\F(\lambda)\,,
\end{eqnarray}
we can define a measurement
\begin{equation}
\tilde \G(s_1,s_2)=\sum_{\lambda\in\Omega_{\F}}D(s_1|1,\lambda)D(s_2|2,\lambda)\F(\lambda)\,,
\end{equation}
with outcome set $\Omega_{\M_1}\times \Omega_{\M_2}$. Setting $D(s_1,s_2|\lambda):=D(s_1|1,\lambda)D(s_2|2,\lambda)$, we have $\sum_{s_1,s_2}D(s_1,s_2|\lambda)=1$. It follows that $\tilde\G$ can be obtained via classical post-processing of $\F$. Furthermore,  $\sum_{s_2}\tilde\G(s_1,s_2)=\M_1(s_1)$ and $\sum_{s_1}\tilde\G(s_1,s_2)=\M_2(s_2)$, therefore $\tilde\G$ is a joint measurement according to Def. \ref{Adef1}.
\end{proof}

The incompatibility of a set of observables can be quantified via the \emph{incompatibility robustness}, defined as the minimum amount of uniform noise needed to render the set jointly measurable \cite{heinosaari15}. This is given by
\begin{equation}
\eta^*=\sup\{\eta>0\,|\,\M^{\eta}_j \text{ are compatible}\,\,\forall \,\alpha\}\,,
\end{equation}
where we consider noisy versions of the original observables,
\begin{equation}\label{Aeq:noisy_obs}
\M_j^{\eta}=\eta \M_j+(1-\eta)\tr{\M_j} \id/d\,.
\end{equation}

\section{The median-of-means estimator}\label{A:mom}

Suppose we perform an experiment $N$ times, and each round results in an (independent) estimator $\hat{O}_j^{(\ell)}$, for each $j=1,\ldots,m$, where $\ell$ denotes the $\ell$--th round of the experiment. We label the resulting set of estimators $\hat{\mathcal{O}}_j=\{\hat{O}_j^{(1)},\ldots,\hat{O}_j^{(N)}\}$, and group its elements into $R$ equally sized parts (each containing $L=\lfloor N/R\rfloor$ elements) to get $R$ estimators
\begin{equation}
\hat O_j^{(r)}=\frac{1}{L}\sum_{i=(r-1)L+1}^{rL}\hat{O}^{(i)}_j\,,
\end{equation}
with $r=1,\ldots,R$. The resulting median-of-means estimator of $\tr{O_j\rho}$, for each $j=1,\ldots,m$, is given by
\begin{equation}
\hat{O}_j(L,R)=\text{median}(\hat O_j^{(1)},\ldots,\hat O_j^{(R)})\,.
\end{equation}
Taking $R=2\log(2m/\delta)$ independent samples of size $L=34\var[O_j]/\epsilon^2$, guarantees
\begin{equation}
|\tr{O_j\rho}-\hat{O}_j(L,R)|\leq \epsilon\,,
\end{equation}
with probability $1-\delta$ \cite{huang20}. Thus, the sample complexity is given by
\begin{equation}
N=O\left(\frac{\log(m/\delta)}{\epsilon^2}\max_{1\leq j \leq m}\var[\hat{O}_j]\right)\,,
\end{equation}
which is bounded by the variance of the estimators.

\section{Randomised measurement implementation}\label{A:sim}

A standard approach for implementing measurements on multiple-qubit quantum computing devices is by a randomised measurement strategy, in which  random unitaries (often belonging to classically tractable families of circuits such as local unitaries, local Cliffords, or global Clifford transformations) are applied to each qubit followed by a projective measurement in the computational basis \cite{elben22}. More generally, measurements that can be implemented using randomisation of projective measurements in some (not necessarily computational) basis, and possible post-processing, are called projective simulable \cite{oszmaniec17}. Such measurements can be written in the form
\begin{equation}\label{Aeq:simulability}
\M(s)=\sum_j p_j(\M)\sum_{s'} D(s|\M,j,s')\P_j(s')\,,
\end{equation}
where $p_j(\M)$ is the probability of implementing the projective measurement $\P_j$, and $D(s|\M,j,s')$ describes the classical post-processing.

The $n$-qubit joint measurement $\F$ defined in the main text can be simulated locally on each qubit by a uniform mixture of the four projective measurements
\begin{equation}\label{Aeq:4projections}
\P_j(\pm)=\frac{1}{2}(\id\pm\ve_j\cdot\vsigma), 
\end{equation}
where $\ve_1=(\eta^x,\eta^y,\eta^z)$, $\ve_2=(\eta^x,\eta^y,-\eta^z)$, $\ve_3=(\eta^x,-\eta^y,\eta^z)$ and $\ve_4=(\eta^x,-\eta^y,-\eta^z)$, with $\vsigma=(X,Y,Z)$. The condition $({\eta^{x}})^2+({\eta^{y}})^2+({\eta^{z}})^2=1$ ensures the measurements are projective, i.e., $\no{\ve_j}=1$. In particular, the single qubit joint measurement $\G$ defined in Eq. (\ref{eq:qubit_jm_biased}), can be written as $\G(\pm j)=\frac{1}{4}\P_j(\pm)$, where the outcome set has been relabelled $\Omega_{\G}=\{\pm j\,|\,j=1,\ldots,4\}$. Clearly, this has the form given in Eq. (\ref{Aeq:simulability}), with $p_j(\G)=1/4$ and
\begin{equation}
D(s|\G,j,s')=
\begin{cases}
1 &\mbox{if} \,\,\, s=js'  \,, \\
0 &\mbox{otherwise} \,,
 \end{cases}
 \end{equation}
for all $j=1,\ldots,4$, with $s'\in\{\pm 1\}$ and $s\in\Omega_{\G}$.
 
Note that for the uniform noise case $\eta=\frac{1}{\sqrt{3}}$, the vectors $\pm\ve_j$ correspond to the eight vertices of a cube centred at $\ze$, as visualised in Fig. \ref{fig:general_idea}.

\section{Proof of Prop. \ref{prop:var}}\label{A:proof}

To estimate the expectation value $\tr{H\rho}$ of a Hamiltonian
\begin{equation}\label{eq:pauli_hamiltonian}
H=\sum_{P\in\mathbb{P}_n}\lambda_PP\,,
\end{equation}
where $\lambda_P\in\mathbb{R}$, our single shot estimator of $\tr{H\rho}$ is given by 
\begin{equation}\label{Aeq:est_ham}
\hat H=\sum_{P}\eta^{-1}_P\lambda_Ps_{P}\,,
\end{equation}
with $s_P\in\{\pm 1\}$ the outcome of the noisy observable 
\begin{equation}
\M_P^{\eta}(s_P)=\frac{1}{2}(\id+s_P\eta_PP)\,.
\end{equation}
It is easily checked that the estimator is unbiased,
\begin{eqnarray}
\mathbb{E}[\hat H]&=&\sum_P\eta^{-1}_P\lambda_P\mathbb{E}[s_P]\\
&=&\sum_P\eta^{-1}_P\lambda_P\sum_{s_P}\tr{\M^{\eta}_P(s_P)\rho}s_P\\
&=&\sum_P\eta^{-1}_P\lambda_P\tr{\eta_P P\rho}=\tr{H\rho}\,,
\end{eqnarray}
where we have taken the expectation over the measurement outcomes $s_P$, with probability $p(s_P|\rho)=\tr{\M^{\eta}_P(s_P)\rho}$.

The variance of the estimator $\hat H$ is given by,
\begin{eqnarray}
\var[\hat H]&=&\mathbb{E}[\hat H^2]-\mathbb{E}[\hat H]^2\\
&=&\mathbb{E}[\hat H^2]-(\tr{H\rho})^2\,,
\end{eqnarray}
with
\begin{eqnarray*}
\mathbb{E}[\hat H^2]=\sum_{P,Q\in\mathbb{P}_n}\frac{\lambda_P\lambda_Q}{\eta_P\eta_Q}\mathbb{E}[s_Ps_Q]\,,
\end{eqnarray*}
and
\begin{eqnarray}\label{Aeq:prod_exp}
\mathbb{E}[s_Ps_Q]=\sum_{s_P,s_Q}\tr{\M^{\eta}_{P,Q}(s_P,s_Q)\rho}s_Ps_Q\,,
\end{eqnarray}
where the outcome products are sampled over the probability distribution $p(s_P,s_Q|\rho)=\tr{\M^{\eta}_{P,Q}(s_P,s_Q)\rho}$. The two observable marginal $\M^{\eta}_{P,Q}$ is obtained from $\F$ via the classical post-processing defined in Eq. (\ref{eq:postprocessing_k_qubit}), such that
\begin{eqnarray}\label{Aeq:joint_pair}
\M^{\eta}_{P,Q}(s_P,s_Q)&=&\sum_{{\bf x}}D(s_P|P,{\bf x})D(s_Q|Q,{\bf x})\F({\bf x})\nonumber\\
&=&\sum_{\substack{{\bf x}:\\\mu(P)\,=\,s_P,\,\mu(Q)\,=\,s_Q}}\F({\bf x})\,.
\end{eqnarray}
Recall from the main text that $\mu: P\mapsto\prod_{i\in\text{supp}(P)}\mu_i(P)$ is a mapping to the product of local outcomes, with $\mu_i(P)=x_i,y_i,z_i$ if $P_i=X,Y,Z$, respectively. For later use we also define $\mu_i(P)=1$ if $P_i=\id$.

In the following lemma we show that $\mathbb{E}[s_Ps_Q]$ depends on the relation between the Pauli operators $P_i$ and $Q_i$ of each subsystem.

\begin{lemma}
For Pauli operators $P,Q\in\mathbb{P}_n$,
\begin{equation}
\mathbb{E}[s_Ps_Q]=\eta_{PQ}f(P,Q)\tr{PQ\rho}\,,
\end{equation}
where $\eta_{PQ}=\prod_{i\in\emph{supp}(PQ)}\eta_i^{\nu_i(PQ)}$,
\begin{equation}
f(P,Q)=\prod_{i=1}^nf_i(P,Q)\,,
\end{equation}
and
\begin{equation*}
f_i(P,Q)=\begin{cases}
1 &\mbox{if} \,\,\, P_i=\id \,\,\text{or} \,\,Q_i=\id \,\,\text{or}\,\, P_i=Q_i  \,, \\
0 &\mbox{otherwise} \,.
 \end{cases}
\end{equation*}
\end{lemma}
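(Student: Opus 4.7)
The plan is to exploit two structural facts: the tensor product form $\F=\bigotimes_{i=1}^n\G_i$ of the locally biased joint measurement, and the fact that the deterministic post-processing (\ref{eq:postprocessing_k_qubit}) makes $s_Ps_Q=\mu(P)\mu(Q)=\prod_i\mu_i(P)\mu_i(Q)$ a product of local outcomes. Together these let me rewrite $\mathbb{E}[s_Ps_Q]$ as the trace of $\rho$ against a tensor product of single-qubit operators, one per qubit, each determined solely by the local pair $(P_i,Q_i)$.

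Concretely, I would start from Eq.~(\ref{Aeq:prod_exp}), substitute Eq.~(\ref{Aeq:joint_pair}), and interchange the order of summation to obtain
\begin{equation*}
\mathbb{E}[s_Ps_Q]=\tr{\biggl(\bigotimes_{i=1}^n M_i\biggr)\rho},\qquad M_i:=\sum_{\mathbf{x}_i}\mu_i(P)\mu_i(Q)\,\G_i(\mathbf{x}_i).
\end{equation*}
The task then reduces to computing the single-qubit operator $M_i$ for each of the four possible configurations of $(P_i,Q_i)$.

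Next I would insert the Bloch form (\ref{eq:qubit_jm_biased}) of $\G_i$ and sum over $\mathbf{x}_i\in\{\pm1\}^3$. Because $\sum_{x_i=\pm1}x_i=0$ while $\sum_{x_i=\pm1}x_i^2=2$ (and likewise for $y_i,z_i$), only terms in which every $\pm1$ variable appears with an even exponent survive. A four-case enumeration gives: (i) if $P_i=Q_i=\id$, then $\mu_i(P)\mu_i(Q)=1$ and $M_i=\id$; (ii) if exactly one of $P_i,Q_i$ equals $\id$ and the other is a Pauli $\sigma$, then the only surviving Bloch term is the one proportional to $\sigma$, giving $M_i=\eta_i^{\nu_i(\sigma)}\sigma$; (iii) if $P_i=Q_i$ is a non-identity Pauli, then $\mu_i(P)\mu_i(Q)$ is the square of a single $\pm1$ variable and hence $1$, so $M_i=\id$; (iv) if $P_i$ and $Q_i$ are distinct non-identity Paulis, then $\mu_i(P)\mu_i(Q)$ is a product of two distinct $\pm1$ variables and every term in the Bloch expansion contains at least one lone variable, giving $M_i=0$.

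Finally I would package these four cases via the indicator $f_i(P,Q)$: in cases (i)--(iii) one has $M_i=\eta_i^{\nu_i(PQ)}(PQ)_i$ with the convention $\eta_i^{\nu_i(\id)}:=1$, while case (iv) contributes $0$ and is precisely the case where $f_i(P,Q)=0$. Taking the tensor product over all qubits and using that $\eta_{PQ}$ collects the $\eta$-factors exactly on $\text{supp}(PQ)$, i.e.\ the qubits of type (ii), one obtains
\begin{equation*}
\bigotimes_{i=1}^n M_i=f(P,Q)\,\eta_{PQ}\,PQ,
\end{equation*}
and tracing against $\rho$ yields the claim. I do not expect a serious obstacle: the whole proof is a direct computation, with the only subtlety being the bookkeeping for case (iii), where $(PQ)_i=\id$ carries no phase, so the unified formula $\eta_i^{\nu_i(PQ)}(PQ)_i$ holds across cases (i)--(iii) without extra signs.
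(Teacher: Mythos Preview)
Your proposal is correct and takes essentially the same approach as the paper: both reduce $\mathbb{E}[s_Ps_Q]$ to the trace of $\rho$ against an operator $T_{P,Q}=\sum_{\bf x}\mu(P)\mu(Q)\F({\bf x})$ and then perform a qubit-by-qubit parity analysis to identify which Pauli terms survive the sum over ${\bf x}_i\in\{\pm1\}^3$. The only organizational difference is that the paper first expands $\F$ in the global Pauli basis (writing $\F({\bf x})=8^{-n}\sum_{R\in\mathbb{P}_n}\eta_R\mu(R)R$ and asking which $R$ satisfies $\mu_i(R)=\mu_i(P)\mu_i(Q)$ for all $i$), whereas you factorize $T_{P,Q}=\bigotimes_i M_i$ at the operator level first and compute each $M_i$ directly from the Bloch form of $\G_i$; your route is slightly more streamlined but the content of the four-case analysis is identical.
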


\begin{proof}
We first rewrite Eq. (\ref{Aeq:prod_exp}) as
\begin{equation}
\mathbb{E}[s_Ps_Q]=\tr{T_{P,Q}\rho}\,,
\end{equation}
where
\begin{eqnarray}\label{Aeq:f_op}
T_{P,Q}&:=&\sum_{s_P,s_Q}\M^{\eta}_{P,Q}(s_P,s_Q)s_Ps_Q\\
&=&\sum_{{\bf x}_1,\ldots,{\bf x}_n}\mu(P)\mu(Q)\F({\bf x}_1,\ldots,{\bf x}_n)\,.
\end{eqnarray}
The latter equality follows by absorbing the constraints of Eq. (\ref{Aeq:joint_pair}) into the summation. Since
\begin{equation}\label{Aeq:jm_sum}
\F({\bf x}_1,\ldots,{\bf x}_n)=8^{-n}\sum_{R\in\mathbb{P}_n}\eta_R\mu(R)R\,,
\end{equation}
with $\eta_R=\prod_{i\in\text{supp}(R)}\eta_i^{\nu_i(R)}$, we have
\begin{eqnarray}\label{Aeq:F_ip}
T_{P,Q}=8^{-k}\sum_{{\bf x}_1,\ldots,{\bf x}_n}\mu(P)\mu(Q)\sum_{R\in\mathbb{P}_n}\eta_R\mu(R) R\,.
\end{eqnarray}

The next lemma shows that the operator $T_{P,Q}$ depends on whether $P_i$ and $Q_i$ qubit-wise commute. Consequently, Lemma \ref{prop:var} follows directly from this observation.
\end{proof}
\begin{lemma}\label{Alem:F}
For the operator $T_{P,Q}$ defined in Eq. (\ref{Aeq:f_op}), the following conditions hold:
\begin{itemize}
\item[1.] If $P_i=\id$ or $Q_i=\id$ or $P_i=Q_i$ for all $i\in\{1,2,\ldots,n\}$, then $T_{P,Q}=\eta_{PQ}PQ$.
\item[2.] If $Q_i,P_i\neq\id$ and $Q_i\neq P_i$ for at least one $i\in\{1,2,\ldots,n\}$, then $T_{P,Q}= 0$.
\end{itemize}
\end{lemma}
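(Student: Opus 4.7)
The plan is to exploit the tensor product structure of $\F$ to reduce the claim about $T_{P,Q}$ to a qubit-wise calculation. Since $\F({\bf x}_1,\ldots,{\bf x}_n) = \bigotimes_{i=1}^n \G_i({\bf x}_i)$ and the outcome product $\mu(P)\mu(Q) = \prod_{i=1}^n \mu_i(P)\mu_i(Q)$ factorises (using the convention $\mu_i(P)=1$ when $P_i=\id$), the sum over ${\bf x}_1,\ldots,{\bf x}_n$ splits as a tensor product
\begin{equation}
T_{P,Q} = \bigotimes_{i=1}^n T_i, \qquad T_i := \sum_{{\bf x}_i\in\{\pm 1\}^3} \mu_i(P)\mu_i(Q)\,\G_i({\bf x}_i).
\end{equation}
Hence it suffices to evaluate the single-qubit operators $T_i$ and verify that $T_i = \eta_i^{\nu_i(PQ)}(PQ)_i$ whenever the local compatibility condition of Lemma~\ref{Alem:F}(1) holds at site $i$, and that $T_i=0$ otherwise.

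Next, I would compute $T_i$ by direct case analysis, using the explicit form $\G_i({\bf x}_i) = \tfrac{1}{8}(\id + x_i\eta_i^x X + y_i\eta_i^y Y + z_i\eta_i^z Z)$ together with the elementary identities $\sum_{{\bf x}_i} a_i = 0$ and $\sum_{{\bf x}_i} a_i b_i = 8\,\delta_{ab}$ for $a,b \in \{x_i,y_i,z_i\}$. The five cases to handle are: (a) both $P_i,Q_i$ equal $\id$, giving $T_i=\id$; (b) exactly one of $P_i,Q_i$ is $\id$, say $Q_i=\id$ and $P_i\in\{X,Y,Z\}$, giving $T_i = \eta_i^{\nu_i(P)}P_i$; (c) $P_i=Q_i\neq\id$, where $\mu_i(P)\mu_i(Q)$ squares to $1$ and $T_i=\id$; (d) $P_i,Q_i$ are distinct non-identity Paulis, where the product $\mu_i(P)\mu_i(Q)$ is a cross term like $x_iy_i$ and every resulting sum vanishes, yielding $T_i=0$. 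In each non-vanishing case, the outcome agrees with $\eta_i^{\nu_i(PQ)}(PQ)_i$ (with the convention that the prefactor is $1$ when $(PQ)_i=\id$), since $P_iQ_i = \id$ when $P_i=Q_i$ and $P_iQ_i \propto$ the single nontrivial Pauli when only one of them is $\id$.

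Finally, I would tensor the single-qubit results. If the local condition $P_i=\id \text{ or } Q_i=\id \text{ or } P_i=Q_i$ holds for every $i$, multiplying $T_i$ across all sites produces $\bigotimes_i \eta_i^{\nu_i(PQ)}(PQ)_i = \eta_{PQ}\, PQ$, establishing part (1). If the local condition fails at even a single site, the corresponding $T_i$ vanishes and the tensor product is $0$, giving part (2). I expect no real obstacle: the only bookkeeping subtlety is aligning the convention $\nu_i(PQ)$, which ignores the phase picked up when $P_iQ_i$ yields the third Pauli, with the noise-factor bookkeeping in $\eta_{PQ}$; this is automatic because the phases from $P_iQ_i$ reassemble into the operator $PQ$ itself on the right-hand side, leaving only the magnitudes $\eta_i^{\nu_i(PQ)}$ to accumulate.
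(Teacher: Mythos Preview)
Your proof is correct and takes a slightly different route from the paper. The paper expands $\F$ in the Pauli basis, writing $T_{P,Q}=8^{-n}\sum_{{\bf x}}\mu(P)\mu(Q)\sum_{R}\eta_R\mu(R)R$, and then argues globally that only the Pauli $R$ with $\mu_i(R)=\mu_i(P)\mu_i(Q)$ at every site survives the outcome sum; the qubit-wise case split enters when analysing whether such an $R$ exists. You instead invoke the tensor product structure first, factorising $T_{P,Q}=\bigotimes_i T_i$ and computing each $T_i$ directly from $\G_i$. Both arguments ultimately rest on the same orthogonality identities $\sum_{{\bf x}_i}a_i=0$ and $\sum_{{\bf x}_i}a_ib_i=8\delta_{ab}$, so the difference is organisational rather than conceptual; your version is a bit more streamlined because the single-qubit calculation is completely explicit and the tensoring step is automatic.

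One small remark: your closing comment about phases is unnecessary. Under the hypothesis of part~(1), at every site one of $P_i,Q_i$ is $\id$ or $P_i=Q_i$, so $P_iQ_i$ is always a phase-free Pauli ($\id$ or a single $X,Y,Z$); the ``third Pauli with an $i$'' situation arises only in your case~(d), where $T_i=0$ anyway. Thus $\bigotimes_i(PQ)_i=PQ$ holds on the nose with no phases to reassemble.
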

\begin{proof}
Since the operator $T_{P,Q}$ in Eq. (\ref{Aeq:F_ip}) involves a summation over ${\bf x}_i=(x_i,y_i,z_i)$ for all $i\in\{1,\ldots,n\}$, where $x_i,y_i,z_i\in\{\pm 1\}$, the only $n$-qubit Pauli operator $R$ that contributes to $T_{P,Q}$ must satisfy
\begin{equation}
\mu(R)=\mu(P)\mu(Q)\,.
\end{equation}

In particular, $\mu_i(R)=\mu_i(P)\mu_i(Q)$ for all $i\in\{1,2,\ldots,n\}$. If $\mu_i(R)=1$ (i.e. $R_i=\id$), then $\mu_i(Q)=\mu_i(P)$. If $\mu_i(R)\in\{x_i,y_i,z_i\}$ then either $\mu_i(Q)=\mu_i(R)$ and $\mu_i(P)=1$, or $\mu_i(Q)=\mu_i(R)$ and $\mu_i(Q)=1$. It follows that $R=PQ$ and $\eta_R=\eta_{PQ}$. This is equivalent to part 1 of Lemma \ref{Alem:F}.

For the second part, the product $\mu(P)\mu(Q)$ contains at least one pair $\mu_i(P)\mu_i(Q)$ such that $\mu_i(P)\neq \mu_i(Q)$ and $\mu_i(P)\mu_i(Q)\neq1$, e.g., $\mu_i(P)\mu_i(Q)=x_iy_i$. Hence, $\mu_i(R)\neq \mu_i(P)\mu_i(Q)$, and the corresponding $R$, when summed over all ${\bf x}_i$, does not appear in $T_{P,Q}$. It follows that $T_{P,Q}=0$.

\end{proof}

To highlight the distinct cases we consider two examples; the first when $P$ and $Q$ are chosen such that $\mathbb{E}[s_Ps_Q]=0$, and the second when $\mathbb{E}[s_Ps_Q]\neq0$. In both cases we consider two-qubit strings $P=P_1\otimes P_2$ and $Q=Q_1\otimes Q_2$, and a joint measurement with uniform noise. We will also assume $P=X\otimes \id$.

\

{\bf Case 1:} Let $Q_1=Q_2=Y$, such that $PQ=XY\otimes Y$. Importantly, $Q_1\notin\{\id,X\}$. Evaluating Eq. (\ref{Aeq:joint_pair}), with $\F$ written in the form of Eq. (\ref{Aeq:jm_sum}), and for outcomes $s_P=s_Q=1$, we find
\begin{eqnarray*}
\M^{\eta}_{P,Q}(1,1)&=&\frac{1}{8^2}\sum_{\substack{{\bf x}_1,{\bf x}_2:\\\,x_1=\,1,\,y_1y_2\,=\,1}}\sum_{R\in\mathbb{P}_n}\mu(R)\eta^{w(R)}R\\
&=&\frac{1}{8^2}\sum_{\substack{{\bf x}_1,{\bf x}_2:\\\,x_1=\,1\\y_1y_2\,=\,1}}(\id+\eta x_1P+\eta^2y_2(y_1Q- iz_1PQ))\\
&=&\frac{1}{4}(\id+\eta P+\eta^2 Q)\,.
\end{eqnarray*}
Note that only four operators appear in the second equality since all other terms sum to zero. The terms involving $PQ$ also sum to zero since the coefficients $z_1y_2\in\{\pm 1\}$ are not restricted by the constraints. The joint measurement is given by
\begin{equation}
\M^{\eta}_{P,Q}(s_P,s_Q)=\frac{1}{4}(\id+s_P\eta P+s_Q\eta^2 Q)\,.
\end{equation}
A simple calculation yields $\mathbb{E}[s_Ps_Q]=0$. 

\

{\bf Case 2:} Let $Q_1=Q_2=X$, such that $PQ=\id\otimes X$. It follows that
\begin{eqnarray*}
\M^{\eta}_{P,Q}(1,1)&=&\sum_{\substack{{\bf x}_1,{\bf x}_2:\\\,x_1=\,1,\,x_1x_2\,=\,1}}\sum_{R\in\mathbb{P}_n}\mu(R)\eta^{w(R)}R\\
&=&\frac{1}{8^2}\sum_{\substack{{\bf x}_1,{\bf x}_2:\\\,x_1=\,1\\x_1x_2\,=\,1}}(\id+\eta x_1P+\eta^2x_1x_2Q+\eta x_2 PQ)\\
&=&\frac{1}{4}(\id+\eta P+\eta^2 Q+\eta PQ)\,.
\end{eqnarray*}
In contrast to case 1, the terms involving the operator $PQ$ no longer sum to zero since the coefficient $x_2$ is restricted by the constraints $x_1=1$ and $x_1x_2=1$. Note that, in general, the constraints on $\mu(P)$ and $\mu(Q)$ restrict the coefficients of $PQ$ when $Q_i=P_i$ or $P_i=\id$ or $Q_i=\id$ for all $i$. For our specific case, the measurement is given by
\begin{equation}
\M^{\eta}_{P,Q}(s_P,s_Q)=\frac{1}{4}(\id+s_P\eta P+s_Q\eta^2 Q+s_Ps_Q\eta PQ)\,,
\end{equation}
and it follows that $\mathbb{E}[s_Ps_Q]=\eta\tr{PQ\rho}$.

\section{Comparison of locally biased classical shadow variance}\label{A:lbcs}

The sample complexity of the classical shadow estimation procedure described in the main text can be found by bounding the variance of $\hat O^{sh}=\tr{O\hat\rho}$ with the \emph{shadow norm} \cite{huang20},
\begin{equation}\label{Aeq:shadow_bound}
\var[\hat O^{sh}]\leq \no{O}_{\text{sh}}^2\,,
\end{equation}
where
\begin{equation}
\no{O}_{\text{sh}}=\max_{\sigma\in\sh}\sqrt{\underset{U\sim\mathcal{U}}{\mathbb{E}}\sum_{e\in\{0,1\}^n}p(e)\bra{e}U\mathcal{M}^{-1,\dagger}(O)U^{\dagger}\ket{e}^2},
\end{equation}
with $p(e)=\bra{e}U\sigma U^{\dagger}\ket{e}$, and $\sh$ the set of states on $\mathcal{H}$.

The classical shadow protocol provides a method for estimating expectation values of Hamiltonians of the form given in Eq.  (\ref{eq:pauli_hamiltonian}). As described in \cite{huang20}, a Pauli measurement $P_i\in\{X,Y,Z\}$ is uniformly at random performed on the $i$--th qubit of the quantum state, for each $i=1,\ldots,n$. A non-zero contribution in the estimator of $\tr{H\rho}$ appears for each Pauli string $Q$ (of the Hamiltonian) which qubit-wise commutes with the measured $P$. The estimator ignores Pauli strings in the Hamiltonian that do not qubit-wise commute with $P$.

A generalised protocol, introduced in \cite{hadfield20}, allows for locally biased measurements of Pauli observables. On the $i$--th qubit, a Pauli measurement $P_i\in\{X,Y,Z\}$ is randomly selected according to the probability distribution $\beta_i(P_i)$, such that $P\in\mathbb{P}_n$ is measured with probability $\beta(P)=\Pi_i\beta_i(P_i)$. The uniform case (cf. \cite{huang20}) is recovered with $\beta_i(P_i)=\frac{1}{3}$ for each $P_i\in\{X,Y,Z\}$.

The estimator of $\tr{H\rho}$ for the Hamiltonian defined in Eq. (\ref{eq:pauli_hamiltonian}) is
\begin{equation}
\hat H^{sh}=\sum_{Q\in\mathbb{P}_n}\lambda_Qg(P,Q)\zeta(P,\text{supp}(Q))\,,
\end{equation}
where $\zeta(P,\text{supp}(Q))=\Pi_{i\in \text{supp}(Q)}\zeta(P,i)$ and $\zeta(P,i)\in\{\pm 1\}$ is the eigenvalue associated with measurement of $P_i$. Furthermore,
\begin{equation}
g(P,Q)=\prod_{i=1}^ng_i(P,Q)\,,
\end{equation}
and
\begin{equation*}
g_i(P,Q)=\begin{cases}
1 &\mbox{if} \,\,\, P_i=\id \,\,\text{or} \,\,Q_i=\id \,, \\
(\beta_i(P_i))^{-1} & \mbox{if} \,\,\, P_i=Q_i\neq \id \,, \\
0 &\mbox{otherwise} \,.
 \end{cases}
\end{equation*}

As shown in \cite{hadfield20}, the variance of the estimator is given by
\begin{equation}\label{Aeq:sh_var}
\var[\hat H^{sh}]=\sum_{P,Q}g(P,Q)\lambda_P\lambda_Q\tr{PQ\rho}-(\tr{H\rho})^2\,.
\end{equation}
Surprisingly, the variance of the joint measurement estimator (Prop. \ref{prop:var}) and classical shadow estimator (Eq. (\ref{Aeq:sh_var}))are identical if the noise parameter $\eta_P$ and probability distribution $\beta(P)$ satisfy
\begin{equation}
\beta_i(P_i)=(\eta_i^{\nu_i(P)})^2\,,
\end{equation}
with $\nu_i(P)$ defined in the main text and $P_i\in\{X,Y,Z\}$.

One easily checks that
\begin{equation}
g(P,Q)=\frac{\eta_{PQ}f(P,Q)}{\eta_P\eta_Q}\,,
\end{equation}
with $f(P,Q)$ defined in Prop. \ref{prop:var}.
Suppose $P_i=\id$ or $Q_i=\id$, such that $g(P_i,Q_i)=1$. Then, $\eta_{P_iQ_i}/(\eta_{P_i}\eta_{Q_i})=1$. On the other hand, if $P_i=Q_i\neq\id$, we have $g(P_i,Q_i)=(\beta_i(P_i))^{-1}$ and  $\eta_{P_iQ_i}/(\eta_{P_i}\eta_{Q_i})=(\eta_i^{\nu_i(P)})^{-2}$. Note that for the unbiased case (i.e., with uniform noise) $\eta=\frac{1}{\sqrt{3}}$, we recover the variance of the standard classical shadow protocol, since $(\beta_i(P_i))=\eta^2=\frac{1}{3}$.

\section{Classical shadows from joint measurements and vice versa}\label{A:cs_jm}

We now show that joint measurements can be used to construct classical shadows, and, conversely, the classical shadow protocol defines a joint measurement and provides a sufficient condition for the compatibility of an arbitrary set of measurements.

\begin{observation*}
From the outcomes of the joint measurement $\F$ we can construct a classical shadow which has a similar form to the shadow given in Eq. (\ref{eq:cs}).
\end{observation*}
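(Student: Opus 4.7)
The plan is to construct a single-shot classical approximation of $\rho$ directly from the outcomes of the locally biased joint measurement $\F$, verify it is unbiased, factorise it into a tensor product of single-qubit Bloch-sphere expressions, and then specialise to uniform noise to compare with Eq. (\ref{eq:cs}).

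First I would define the single-shot object
$$\hat\rho^{\,\text{JM}} = \frac{1}{2^n}\sum_{P \in \mathbb{P}_n} \eta_P^{-1}\,\mu(P)\,P,$$
extending the notation of Section \ref{sec:protocol} by the conventions $\eta_i^{\nu_i(P)} = 1$ and $\mu_i(P) = 1$ whenever $P_i = \id$. Unbiasedness is immediate: as shown in Section \ref{sec:protocol}, $\eta_P^{-1}\mu(P)$ is an unbiased estimator of $\tr{P\rho}$ for every Pauli string $P$, so by linearity and the Pauli expansion of the state
$$\mathbb{E}[\hat\rho^{\,\text{JM}}] = \frac{1}{2^n}\sum_{P\in\mathbb{P}_n}\tr{P\rho}\,P = \rho.$$
Hermiticity and $\tr{\hat\rho^{\,\text{JM}}} = 1$ follow from the fact that only the $P = \id$ term contributes to the trace.

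Next I would exhibit the tensor product structure. Using $\eta_P = \prod_i \eta_i^{\nu_i(P)}$ and $\mu(P) = \prod_i \mu_i(P)$ together with the above conventions, the sum over $\mathbb{P}_n$ factorises as
$$\hat\rho^{\,\text{JM}} = \bigotimes_{i=1}^n \frac{1}{2}\sum_{P_i\in\{\id,X,Y,Z\}}\bigl(\eta_i^{\nu_i(P_i)}\bigr)^{-1}\mu_i(P_i)\,P_i = \bigotimes_{i=1}^n \frac{1}{2}(\id + \ve_i\cdot\vsigma),$$
with $\ve_i = (x_i/\eta^x_i,\, y_i/\eta^y_i,\, z_i/\eta^z_i)$ and $\no{\ve_i}^2 = (\eta^x_i)^{-2}+(\eta^y_i)^{-2}+(\eta^z_i)^{-2}$. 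Rewriting the single-qubit factor as $\hat\rho^{\,\text{JM}}_i = \no{\ve_i}\rho_{\tilde\ve_i} + \tfrac{1}{2}(1 - \no{\ve_i})\id$, with $\tilde\ve_i = \ve_i/\no{\ve_i}$, exhibits a generically non-positive but unit-trace, Hermitian operator, exactly as one expects for a single-shot classical snapshot.

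Finally I would specialise to uniform noise $\eta^x_i = \eta^y_i = \eta^z_i = 1/\sqrt{3}$, so that $\no{\ve_i} = 3$ and
$$\hat\rho^{\,\text{JM}}_i = 3\rho_{\tilde\ve_i} - \id,$$
term-by-term matching Eq. (\ref{eq:cs}); the only difference is that $\rho_{\tilde\ve_i}$ lives on a vertex of the cube inscribed in the Bloch sphere instead of a Pauli eigenstate. Since the Pauli coefficients of $\hat\rho^{\,\text{JM}}$ are precisely the single-shot estimators analysed in Section \ref{sec:hamiltonians}, computing $\tr{O\,\hat\rho^{\,\text{JM}}}$ for $O = \sum_P \lambda_P P$ reproduces the estimator of Eq. (\ref{eq:est_ham}) and therefore the variance of Proposition \ref{prop:var}, so no performance is lost. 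The only delicate step is the bookkeeping in the factorisation: one must treat the identity entries of $P_i$ on equal footing with $X$, $Y$, $Z$ so that the outer sum $\sum_{P \in \mathbb{P}_n}$ is reproduced exactly by the single-qubit local sums, with matching noise weights.
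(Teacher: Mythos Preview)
Your proposal is correct and follows essentially the same route as the paper: define $\hat\rho^{\,\text{JM}}=\frac{1}{2^n}\sum_P\eta_P^{-1}\mu(P)P$, factorise it into $\bigotimes_i\frac{1}{2}(\id+\ve_i\cdot\vsigma)$, rewrite each factor as $\no{\ve_i}\rho_{\tilde\ve_i}+\frac{1}{2}(1-\no{\ve_i})\id$, and specialise to $\eta=1/\sqrt{3}$ to obtain $3\rho_{\tilde\ve_i}-\id$. Your extra care in spelling out the identity-entry conventions and the unbiasedness check is helpful but does not depart from the paper's argument.
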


Suppose ${\bf x}=({\bf x}_1,\ldots,{\bf x}_n)$ is the outcome of a single joint measurement $\F$ on a state $\rho$, where ${\bf x}_i=(x_i,y_i,z_i)$ labels the outcome associated with the $i$--th qubit system. For each Pauli string $P\in\mathbb{P}_n$, we construct the product $\mu(P):=\prod_{i\in\text{supp(P)}}\mu_i(P)$, which provides a single shot estimator of the expectation value of $P$, since
\begin{equation}
\underset{p({\bf x}|\rho)}{\mathbb{E}}\mu(P)=\eta_P\tr{P\rho}\,,
\end{equation}
where $p({\bf x}|\rho)=\tr{\F({\bf x})\rho}$. It follows that we can construct a single shot classical approximation of the quantum state $\rho$ via
\begin{eqnarray}
\hat \rho^{\,\text{JM}}&=&\frac{1}{2^n}\sum_{P\in\mathbb{P}_n}\frac{1}{\eta_P}\mu(P)P\\
&=&\bigotimes_{i=1}^n\frac{1}{2}(\id+ \ve_i\cdot\vsigma)\,,
\end{eqnarray}
where $\ve_i=(\frac{x_i}{\eta^{x}_i},\frac{y_i}{\eta^{y}_i},\frac{z_i}{\eta^{z}_i})$ and ${\no { \ve_i}}^2=({\eta^{x}_i})^{-2}+({\eta^{y}_i})^{-2}+({\eta^{z}_i})^{-2}$. On the $i$--th qubit,
\begin{eqnarray}
\hat \rho^{\,\text{JM}}_i&=&\frac{1}{2}(\id+ \ve_i\cdot\vsigma)\\
&=&\frac{1}{2}(\id+{\no {\ve_i}}\tilde \ve_i\cdot \vsigma)\\
&=&\no {\ve_i}\rho_{\tilde\ve_i}+\frac{\id}{2}(1-\no{\ve_i})\,,
\end{eqnarray}
with $\tilde\ve_i=\ve_i/\no{\ve_i}$ and $\rho_{\tilde \ve_i}=\frac{1}{2}(\id+\tilde \ve_i\cdot \vsigma)$. 

This can be viewed as a locally biased version of a classical shadow, which we obtain from the joint measurement $\F$. In fact, if we restrict to the unbiased form of $\F$, with uniform noise $\eta=\frac{1}{\sqrt{3}}$, then  ${\no { \ve_i}}=\frac{\sqrt{3}}{\eta}=3$, and we have
\begin{equation}\label{Aeq:qubit_cs}
\hat \rho^{\,\text{JM}}_i=3\rho_{\tilde \ve_i}-\id\,.
\end{equation}
This has a similar form to the classical shadow defined in Eq. (\ref{eq:cs}) but $\rho_{\tilde \ve_i}$ is no longer a Pauli eigenstate. As the variance of the estimators for both approaches are identical (see Appendix \ref{A:lbcs}), the two shadows provide the same performance.

\begin{observation*}
A classical shadow defines a joint measurement and provides a sufficient condition for the compatibility of an arbitrary set of measurements.
\end{observation*}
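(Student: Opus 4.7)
The plan is to explicitly construct, from a classical shadow protocol, a single parent POVM whose classical post-processing reproduces the (noisy) statistics of any target measurement, thereby witnessing joint measurability. I would begin by observing that a single round of the shadow protocol samples a unitary $U$ from a 2-design $\mathcal{U}$ and records a computational basis outcome $x$, so the effects $\G(x,U) := \frac{1}{|\mathcal{U}|}U^{\dagger}\kb{x}{x}U$ form a valid POVM on $\hi$ with composite outcome $(x,U)$. The 2-design property then guarantees that the snapshot $\hat\rho_{x,U} = (d+1)U^{\dagger}\kb{x}{x}U - \id$ has $\mathbb{E}[\hat\rho_{x,U}]=\rho$ and $\tr{\hat\rho_{x,U}}=1$, as recalled in the main text.

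Next, for any collection of target POVMs $\{\M_j\}$, I would define the candidate post-processing $q(s|j,x,U) := \tr{\M_j^{\eta}(s)\,\hat\rho_{x,U}}$, where $\M_j^{\eta}$ is the noisy version given in Eq.~(\ref{Aeq:noisy_obs}). Normalisation $\sum_s q(s|j,x,U)=1$ is immediate from $\sum_s \M_j^{\eta}(s)=\id$ and $\tr{\hat\rho_{x,U}}=1$. The non-trivial requirement is that $q(s|j,x,U)\geq 0$ for all $j$, $s$, $x$ and $U\in\mathcal{U}$. A direct expansion, using $\tr{\M_j^{\eta}(s)}=\tr{\M_j(s)}$, gives
\begin{equation*}
q(s|j,x,U) = \eta(d+1)\bra{x}U\M_j(s)U^{\dagger}\ket{x} + \tfrac{1-\eta(d+1)}{d}\tr{\M_j(s)}.
\end{equation*}
Once non-negativity is established, the POVM $\G$ together with $q(\cdot|j,\cdot,\cdot)$ realises $\M_j^{\eta}$ as a marginal, so the whole family $\{\M_j^{\eta}\}$ is jointly measurable and each original $\M_j$ is simulated up to uniform noise.

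The main obstacle, and the substance of the claim, is extracting the sharp sufficient threshold $\eta\leq 1/(d+1)$ from non-negativity uniformly over all observables. For sufficiency, whenever $\eta\leq 1/(d+1)$ both terms in the displayed expression are manifestly non-negative, so the inequality holds for \emph{every} effect and every choice of $x$, $U$. For the tightness of this value within the universal claim, I would choose an effect $\M_j(s)$ with a non-trivial kernel and a unitary $U\in\mathcal{U}$ such that $U^{\dagger}\ket{x}$ lies in that kernel; the first term then vanishes and non-negativity reduces to $1-\eta(d+1)\geq 0$. Finally, I would record the refinement flagged in the main text: if the family is restricted so that $\bra{x}U\M_j(s)U^{\dagger}\ket{x}>0$ uniformly, the first term provides strict slack and larger $\eta$ become admissible, which is how the construction recovers the tight incompatibility robustness for the Pauli triple while remaining only a sufficient condition in general, consistent with~\cite{heinosaari15b}.
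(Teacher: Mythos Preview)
Your argument is correct and follows essentially the same route as the paper: construct the parent POVM $\G(x,U)=\frac{1}{|\mathcal{U}|}U^\dagger\kb{x}{x}U$, define the post-processing via $\tr{\M_j^\eta(s)\hat\rho_{x,U}}$, expand to obtain $\eta(d+1)\bra{x}U\M_j(s)U^\dagger\ket{x}+\frac{1-\eta(d+1)}{d}\tr{\M_j(s)}$, and read off $\eta\leq\frac{1}{d+1}$ as the universal sufficient condition. Your presentation is in fact slightly more complete than the paper's, since you explicitly verify the normalisation $\sum_s q(s|j,x,U)=1$ and supply a saturation argument (choosing an effect whose kernel contains some $U^\dagger\ket{x}$) that the paper asserts but does not spell out.
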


Suppose a classical shadow on $\mathcal{H}$ is constructed from a global unitary ensemble $\mathcal{U}$ which constitutes a 2-design. The randomised measurement procedure can be described by a single POVM,
\begin{equation}
\G(x,U)=\frac{1}{|\mathcal{U}|}U^{\dagger}\kb{x}{x}U,
\end{equation}
where $U\in\mathcal{U}$ and $x\in\{0,1\}^n$. It follows from the 2-design property of $\mathcal{U}$ that each classical snapshot has the form \cite{huang20}:
\begin{equation}
\hat \rho_{x,U}=(d+1)U^{\dagger}\kb{x}{x}U-\id\,.
\end{equation}
While $\hat \rho_{x,U}$ is not necessarily positive semidefinite, it has $\tr{\rho_{x,U}}=1$ and satisfies $\mathbb{E}[\hat\rho_{x,U}]=\rho$. For a POVM $\M_j$ we can compute
\begin{equation}
q(s|j,x,U)=\tr{\M_j(s)\hat\rho_{x,U}}\,,
\end{equation}
which, in expectation, yields the outcome statistics of the measurement $\M_j$. To determine which observables can be measured jointly from classical shadows, we require that $q(s|j,x,U)$ is a classical post-processing function, i.e.,
\begin{equation}\label{eq:cpp_condition}
\tr{\M_j^{\eta}(s)\hat \rho_{x,U}}\geq 0\,,
\end{equation}
where
\begin{equation}
\M_j^{\eta}(s)=\eta\M_j(s)+(1-\eta)\frac{\tr{\M_j(s)}}{d}\id\,.
\end{equation}
This holds if, for all $x,U$ and $s$,
\begin{equation*}
\left(d^{-1}(1-\eta)-\eta\right)\tr{\M_j(s)}+\eta(d+1)\tr{\M_j(s)\varphi_x^U}\geq0\,,
\end{equation*}
where $\varphi_x^U=U^{\dagger}\kb{x}{x}U$. Since $\tr{\M_j(s)\varphi_x^U}\geq0$, it follows that
\begin{equation}
\eta\leq\frac{1}{d+1}\,.
\end{equation}
Thus, we can simulate the measurements $\M^{\eta}_j$ for $\eta\leq\frac{1}{d+1}$, from the classical post-processing $q(s|j,x,U)$. Improved bounds can be achieved if $\tr{\M_j(s)\varphi_{x,U}}>0$. 

For the case of the qubit classical shadow of Eq. (\ref{Aeq:qubit_cs}), given by $\hat\rho^{\,\text{JM}}_{i}=3\rho_{\tilde \ve_i}-\id$, we can apply the condition of Eq. (\ref{eq:cpp_condition}),
\begin{equation}
\tr{\M^{\eta}_P(s)\hat\rho^{\,\text{JM}}_{i}}=\tr{\frac{1}{2}(\id+s\eta P)\hat\rho^{\,\text{JM}}_{i}}\geq 0\,,
\end{equation}
where $P\in\{X,Y,Z\}$. After some simple algebra we get the precise joint measurability region $\eta\leq\frac{1}{\sqrt{3}}$.

\section{Transforming single-qubit measurement noise to stochastic noise}\label{A:noise_stochastic}

In the main text we stated that access to noiseless single-qubit unitaries prior to noisy measurements allows for the transformation of measurement (readout) noise to stochastic (classical) noise.
This observation was made in ``further research'' discussions in \cite{maciejewski20} in the context of error-mitigation.
To describe how this works, first note that an arbitrary single-qubit measurement $\M$ is specified by a single measurement effect $\M(0) = \alpha \psi + (1-\beta) \psi^{\perp}$, where $\psi,\ \psi^{\perp}$ are pure quantum states, and $\braket{\psi}{\psi^{\perp}}=0$. 
From the normalisation, it follows that $\M(1) = \id-\M(0) = (1-\alpha) \psi+\beta \psi^{\perp}$, hence both effects commute.
This ensures the existence of a unitary $U$ that diagonalises both effects. Consequently, we can obtain a transformed POVM $\tilde{\M}$ with effects $\tilde{\M}(0) = U\M(0)U^{\dagger} = \alpha \ketbra{0}{0}+(1-\beta)\ketbra{1}{1}$ and $\tilde{\M}(1) = (1-\alpha) \ketbra{0}{0} + \beta \ketbra{1}{1}$. 

It is straightforward to verify that such a transformed measurement is related to the computational basis measurement via a stochastic transformation $T$, i.e., we have $\tilde{\M}(i) = \sum_{j=0,1} T_{ij}\ketbra{j}{j}$, or in matrix form
\begin{align}\label{Aeq:sm}
    \begin{pmatrix}
\tilde{\M}(0)\\
\tilde{\M}(1)
\end{pmatrix}
 =
\underbrace{\begin{pmatrix} 
\alpha & 1-\beta \\
1-\alpha & \beta 
\end{pmatrix}}_{T}
\begin{pmatrix}
\ketbra{0}{0}\\
\ketbra{1}{1}
\end{pmatrix}\,.
\end{align}
This can be viewed as noise that applies classical post-processing to the ideal measurement outcomes \cite{maciejewski20}. The parameters $\alpha$ and $\beta$ can be interpreted as the probability of measuring outcomes $0$ and $1$ if the input is $\ketbra{0}{0}$ and $\ketbra{1}{1}$, respectively.
Note that, for example, the depolarising, amplitude damping, and Pauli noise channels all act as stochastic maps on the computational basis measurement without requiring an additional unitary transformation.

\section{Implementation of a joint measurement via noisy projective measurements}\label{A:noise}

We start by providing a semidefinite program (SDP) for characterising qubit measurements which can be simulated by noisy projective measurements. By noisy projective measurements we mean projective measurements affected by classical readout noise (cf. the preceeding section). In what follows we shall refer to measurements simulated by noisy projective measurements as noisy projective simulable. Let $\mathcal{M}_k$ denote the set of  POVMs on $\mathbb{C}^d$ with $k\geq d$ outcomes, and let $\mathcal{P}$ denote the set of qudit projective measurements. A measurement $\M\in\mathcal{M}_k$ is projective simulable (which we denote by the set $\mathbb{S}_k$) if it can be realised by classical randomisation followed by classical post-processing of measurements from $\mathcal{P}\subset \mathcal{M}_k$ \cite{oszmaniec17} (see also Appendix \ref{A:sim}).

For qubit systems, a measurement $\M\in\mathcal{M}_k$ is projective simulable if and only if it can be expressed as a convex combination of two-outcome POVMs. This leads to the formulation of a SDP where $\M$ can be written as a sum of $k\choose 2$ subnormalised POVMs $\N_{ij}$, with $i,j\in\{1,\ldots, k\}$ and $i<j$, such that the effects $\N_{ij}(s)$ are non-zero only when $s\in\{i,j\}$. Formally, the SDP reads:
\begin{eqnarray*}
&&\text{given}\quad \M(s)\,\,\, \text{for all} \,\,\,s=1,\ldots,k\\
&&\text{check whether}\\
&&\M(s)=\sum_{\substack{i,j ,\ i<j}}\ \N_{ij}(s)\\
&&\text{subject to}\\
&&\N_{ij}(s)\geq 0,\,\,s=1,\ldots,k,\\
&&\N_{ij}(s)=0 \,\,\,\text{for}\,\,\, s\notin \{i,j\},\\
&&\sum_{s}\N_{ij}(s)=p_{ij}\id\,, \,\,\,p_{ij}\geq 0,\\ 
&&\sum_{i\neq j}p_{ij}=1\, .
\end{eqnarray*}

In order to incorporate noisy qubit projective simulability, the optimisation variables (POVMs) $\N_{ij}$ are transformed under the action of stochastic noise.
Specifically, the noise for each POVM is specified by a stochastic map $\Lambda_{\text{noise}}^{ij}$ that transforms the two-outcome measurements as  $\Lambda^{ij}_{\text{noise}}(\N_{ij})(k) = \sum_{l\in\left\{i,j\right\}}(\Lambda^{ij}_{\text{noise}})_{kl}\N(l)$. Furthermore, the presence of  classical noise breaks the symmetry between the outcomes of binary measurements which makes it necessary to incorporate strategies that exchange the role of the two outputs: $\text{T}^{ij}_{\text{bitflip}}(\N_{ij})(s)$ swaps outcomes $i$ and $j$, i.e., we have $\text{T}^{ij}_{\text{bitflip}}(\N_{ij})(i) = \N_{ij}(j)$ and $\text{T}^{ij}_{\text{bitflip}}(\N_{ij})(j) = \N_{ij}(i)$. Finally, the presence of classical noise in general makes it impossible to implement deterministic dichotomic measurements (which in the noiseless scenarios are realised by projective measurements of the form $\N_{ij}(i)=0,\N_{ij}(j)=\id$). Of course, such measurements can be realised in the noisy scenario by deterministic post-processing. Incorporating these remarks yields the following SDP for characterising $\mathbb{S}^{\Lambda}_k$, i.e., the set of  noisy qubit  projective simulable measurements with $k$ outcomes:

\begin{eqnarray*}
&&\text{given}\quad \M(s)\,\,\, \text{for all} \,\,\,s=1,\ldots,k\\
&&\text{check whether}\\
&&\M(s)=\sum_{\substack{i,j ,\ i<j}}\ \tilde{\N}_{ij}(s)+\sum_{\substack{i,j ,\ i>j}}\text{T}^{ij}_{\text{bitflip}}(\tilde{\N}_{ij})(s) +q_s\id\\
&&\text{subject to}\\
&& \tilde{\N}_{ij} = \Lambda^{ij}_{\text{noise}}(\N_{ij})\,, \\
&&\N_{ij}(s)\geq 0,\,\,s=1,\ldots,k,\\
&&\N_{ij}(s)=0 \,\,\,\text{for}\,\,\, s\notin \{i,j\}\,,\\
&&\sum_{s}\N_{ij}(s)=p_{ij}\id\,, \,\,\,p_{ij}\geq 0,\ q_{s}\geq 0\,,\,\,\, \\
&&\sum_{i\neq j}p_{ij}+\sum_{s}q_s=1\,.
\end{eqnarray*}

We now want to find the optimal joint measurement $\E\in\mathbb{S}^{\Lambda}_8$ with marginals that yield the set of unsharp Pauli measurements  $\M^{\eta}_X(x)=\frac{1}{2}(\id+  x\eta X)$, $\M^{\eta}_Y(y)=\frac{1}{2}(\id+y\eta Y)$ and $\M^{\eta}_Z(z)=\frac{1}{2}(\id+z\eta Z)$, for a particular noise matrix $\Lambda$. Incorporating this constraint into the standard joint measurability SDP we get:
\begin{eqnarray*}
&&\text{maximise}\quad \eta\\
&&\text{over}\quad \eta\,,\,\,\E\\
&&\text{subject to}\\
&&\E\in\mathbb{S}^{\Lambda}_8\,,\\
&&\E({\bf x})\geq0\,\,\, \text{for all}\,\, {\bf x}\in\Omega_{\E} \,,\,\,\,\eta\leq 1\,,\\
&&\M^{\eta}_X(x)=\sum_{y,z}\E({\bf x})\,,\ldots,\M^{\eta}_Z(z)=\sum_{x,y}\E({\bf x})\,.\\
\end{eqnarray*}

\section{Heuristics for optimising joint measurability in the presence of noise}\label{A:optimization_numerics}

In Table~\ref{table_benchmarks} we presented results from the calculations of the upper bound on the variance, obtained for different estimators of energies of molecular Hamiltonians in the presence of readout noise.
We now discuss the heuristic methods we used to optimise the joint measurability (JM) strategy for each Hamiltonian.

We start by recalling the expression for the variance of the JM estimator (Eq.~\eqref{eq:est_var} in the main text) and derive explicit upper bounds,
\begin{align*}\label{appendix_eq:variance_jm}
\var[\hat H]=&\sum_{P,Q\in\mathbb{P}_n}\frac{\eta_{PQ}f(P,Q)}{\eta_P\eta_Q}\lambda_P\lambda_Q\tr{PQ\rho}-(\tr{H\rho})^2 \\ 
&\leq \sum_{P,Q\in\mathbb{P}_n}\frac{\eta_{PQ}f(P,Q)}{\eta_P\eta_Q}\lambda_P\lambda_Q\tr{PQ\rho}\ . \numberthis
\end{align*}
Note that the above bound simply disregards the non-positive term that is not dependent on the measurement strategy.
For ease of notation, let us define
\begin{equation}
    \kappa_{P,Q} \coloneqq \frac{\eta_{PQ}f(P,Q)}{\eta_P\eta_Q}\ \lambda_{P}\lambda_{Q}\ .
\end{equation}
Now we write 
\begin{align*}\label{appendix_eq:jm_norm}
\sum_{P,Q\in\mathbb{P}_n}\ \kappa_{P,Q} \tr{PQ\rho} &=\tr{\left(\sum_{P,Q\in\mathbb{P}_n}\kappa_{P,Q}PQ\right)\ \rho}  \\
&\leq \max_{\rho}\tr{\left(\sum_{P,Q\in\mathbb{P}_n}\kappa_{P,Q}PQ\right)\ \rho} \\
&=\no{\sum_{P,Q\in\mathbb{P}_n}\kappa_{P,Q} PQ }_{\infty} \eqqcolon \no{H}^2_{\text{JM}} , \numberthis
\end{align*}
where $\no{.}_{\infty}$ denotes the operator norm.
Hence, the above bound corresponds to the worst-case among all states from the term in the variance that is dependent on the measurement strategy.
We note that, perhaps unsurprisingly, expressions with exactly the same structure appear in the classical shadows framework and correspond to the shadow norm \cite{huang20,koh20}.
We therefore use the notation $\no{H}^{2}_{\text{JM}}$ to denote analogously the ``joint measurability'' (or JM-)norm of a Hamiltonian corresponding to the specific joint measurability strategy.

We are interested in finding a joint measurement strategy that minimises the above analogue of the shadow norm. We minimise the JM-norm via standard black-box optimisation methods, where the visibilities (i.e., the sharpness) of each qubit are the optimisation variables.
To perform this optimisation one needs to:
\begin{enumerate}
    \item[(a)] specify a cost function; 
    \item[(b)] ensure that the optimal visibilities guarantee joint measurability in the presence of noise;
    \item[(c)] ensure that the parent POVM has the structure of Eq.~\eqref{eq:qubit_jm_biased}.
\end{enumerate}

We now discuss all points in detail. We start by performing an optimisation that does not necessarily guarantee the desired structure of the parent POVM (point (c)), and leave this issue for later. The straightforward choice for (a) is to treat the JM-norm itself as a cost function. Unfortunately, this requires calculating operator norms of exponentially large (and generally not sparse) operators, and is thus not practical. 
Another possibility would be to choose a cost function as $\sum_{P,Q}|\kappa_{P,Q}|$ (note that this corresponds to bounding the norm of the sum via the sum of the norms).
While this appears to be a reasonable choice, each function evaluation for this optimisation would require performing a number of elementary operations quadratic in the number of Hamiltonian terms.
Since the Hamiltonians considered usually consist of thousands of terms, this is relatively costly to optimise in practice (it is, however, not prohibitive). We therefore choose a simplified cost function given by
\begin{align}
    \text{cost}_{\text{diag}}(H) = \sum_{P}\kappa_{P,P} \ ,
\end{align}
which we call the ``diagonal'' cost function in analogy to a similar strategy considered in the (noiseless) locally biased classical shadows (LBCS) settings \cite{hadfield20}.
Indeed, we note that the above optimisation is analogous to the problem considered in the LBCS paper where the authors aim to optimise the classical shadow strategy by introducing biases that minimise some cost function related to the Hamiltonian in question.

Now we discuss point (b).
First, the usual joint measurability constraints $\eta_{i}^{x,y,z} \in \left[0,1\right]$ and  $(\eta_{i}^{x})^2+(\eta_{i}^{y})^2+(\eta_{i}^{z})^2\leq 1$ can be directly incorporated into the optimisation procedure provided that the used optimisers allow for non-linear constraints.
However, one needs to ensure that the chosen visibilities admit a joint measurability strategy via classical randomisation (and post-processing) of \emph{noisy} two-outcome qubit measurements.
One way to approach this is to introduce an additional penalty term to the cost function that penalizes the visiblities for which such a strategy does not exist.
However, this requires solving an SDP for each qubit at each function evaluation. While the SDPs are generally solved very quickly, in the course of the optimisation the cost function is to be evaluated tens of thousands of times, which can introduce a significant practical overhead.
We therefore implement a different strategy that does not require solving SDPs at each function evaluation.
Namely, we first perform a black-box optimisation of the visibilities without checking whether the JM-strategy exists (or, equivalently, whether the relevant SDP is feasible). 
We then take the optimal visibilities $\tilde{\eta}_{i}^{x}, \tilde{\eta}_{i}^{y}, \tilde{\eta}_{i}^{z}$ for each qubit $i$, and check whether the SDP is feasible. In cases when the answer is negative, we solve an additional SDP that finds the highest multiplicative factor $r^{*}$, such that for $r^{*}\tilde{\eta}_{i}^{x}, r^{*}\tilde{\eta}_{i}^{y}, r^{*}\tilde{\eta}_{i}^{z}$, the joint measurability strategy with noisy measurements \emph{does} exist.

To deal with point (c), we simply check if the parent POVM in the solution from (b) has the desired structure; if not, we allow for another multiplicative factor to increase the noise and enforce it.
Notably, we find that POVMs obtained in (b) always have the same form as Eq.~\eqref{eq:qubit_jm_biased}. We intend to investigate this observation in future work.

The numerical findings of our strategy are summarised in Table~\ref{table_benchmarks} of the main text. To perform the numerical optimisation we use scipy \cite{scipy} implementation of the global optimisation strategy via differential evolution \cite{differential_evolution} and via Basin-hopping \cite{basinhopping} with Powell's method \cite{powell_method} and Sequential Least Squares Programming (SLSQP) \cite{numerical_optimization} as local optimisers.
We used qiskit \cite{qiskit} to handle the quantum chemistry Hamiltonians.

\section{Additional numerical details}\label{A:optimization_numbers}

In Table~\ref{appendix_table_benchmarks} below, we present the explicit quantities obtained from the numerical calculations (the counterpart of Table~\ref{table_benchmarks} in the main text, where the entries were normalised by the variance bound for the classical shadow strategy). 
Note that there are small (on the level of $10^{-5}$ relative error) discrepancies between classical shadows (CS) and unbiased joint measurability (JM) strategies.
In both cases, the calculation of the upper bound entails calculating operator norms of exponentially large matrices that are linear combinations of multiple Pauli terms (see Eq.~\eqref{appendix_eq:jm_norm}). 
In the CS strategy, all of the Pauli coefficients are calculated with very high precision (requiring simple multiplication of numbers), while in the JM strategy there is additional imprecision originating from the fact that the $\eta$ coefficients are solutions of SDPs. 
We therefore strongly suspect that both functions are analytically the same, and that the small errors in the coefficients obtained from the SDP aggregate to give the observed discrepancy in the operator norm.
In the main text we disregard these discrepancies.

\begin{table}[h!]
\begin{center}
\begin{tabular}{c|c|ccccc}
\textbf{Strategy/Molecule} & \textbf{En.} & & $\text{H}_2$ & LiH   & Be$\text{H}_2$ & $\text{H}_2$O \\ \hline
Classical shadows        & & &  68          & 144548  &  1480298           & 1999716    \\
Joint measurability                  & JW &  &  68          & 144549  &  1480313           & 1999735           \\
JM-optimised        & & & 68         & 8304  &   58294         & 198354     
\vspace{0.05cm}
\\ \hline
Classical shadows                 & & & 421          & 28446  & 103755            & 443442           \\
Joint measurability                  & BK & & 421          & 28446  & 103756            & 443445          \\
JM-optimised        & &  & 57          & 22274 & 57110           & 268506          
\vspace{0.05cm}
\\ \hline
Classical shadows                 &  & & 325          & 675792  &  2731597           & 3995396      \\
Joint measurability                  & P & & 325          & 675799  &  2731625           & 3995435          \\
JM-optimised        & & &123         & 14923  &   58294         & 67128           
\end{tabular}%
\end{center}
\caption{\label{appendix_table_benchmarks}
Explicit variance upper bounds obtained for various strategies. This extends Table~\ref{table_benchmarks} from the main text by including unbiased strategies for both classical shadows and joint measurements. Note that the entries are not normalised.
We interpret the relative differences between the unbiased CS and JM strategies, on the level of $10^{-5}$, as numerical artifacts. 
}
\end{table}

In all simulations we used measurement noise obtained via diagonal detector tomography \cite{maciejewski21} of single-qubit systems on IBM's Washington machine.

\end{document}